\documentclass[a4paper,11pt]{article}

\usepackage{fullpage}
\usepackage{times}
\usepackage{soul}
\usepackage{url}
\usepackage[utf8]{inputenc}
\usepackage[small]{caption}
\usepackage{graphicx}
\usepackage{xcolor}
\usepackage{amsmath}
\usepackage{booktabs}
\usepackage{algorithm}
\urlstyle{same}
\usepackage{enumerate}
\usepackage{dsfont}
\usepackage{cleveref}

\usepackage{amsthm}
\usepackage{amssymb}
\usepackage{algpseudocode}

\usepackage{bbm}

\newtheorem{theorem}{Theorem}[section]
\newtheorem{corollary}[theorem]{Corollary}
\newtheorem{proposition}[theorem]{Proposition}
\newtheorem{lemma}[theorem]{Lemma}

\theoremstyle{definition}
\newtheorem{definition}[theorem]{Definition}
\newtheorem{example}[theorem]{Example}

\DeclareMathOperator*{\argmax}{arg\,max}
\DeclareMathOperator*{\argmin}{arg\,min}
\newcommand*\diff{\mathop{}\!\mathrm{d}} 

\newcommand{\mgain}[3]{\Delta^+_{#1}(#2, #3)}
\newcommand{\mgaini}[2]{\mgain{i}{#1}{#2}}
\newcommand{\mloss}[3]{\Delta^-_{#1}(#2, #3)}
\newcommand{\mlossi}[2]{\mloss{i}{#1}{#2}}

\newcommand{\mwhw}{MWHW$_x$}

\usepackage{natbib}
\usepackage{authblk}

\allowdisplaybreaks

\title{\bf Weighted Envy-Freeness for Submodular Valuations}

\author[1]{Luisa Montanari}
\author[1]{Ulrike Schmidt-Kraepelin}
\author[2]{Warut Suksompong}
\author[3]{Nicholas Teh}

\affil[1]{Technische Universität Berlin, Germany}
\affil[2]{National University of Singapore, Singapore}
\affil[3]{University of Oxford, UK}

%%%%%%%%%%todonotes package%%%%%%%%%%
\usepackage[colorinlistoftodos,prependcaption,textsize=tiny]{todonotes}

\date{\vspace{-10mm}}

\begin{document}

\maketitle

\begin{abstract}
We investigate the fair allocation of indivisible goods to agents with possibly different entitlements represented by weights. 
Previous work has shown that guarantees for additive valuations with existing envy-based notions cannot be extended to the case where agents have matroid-rank (i.e., binary submodular) valuations.
We propose two families of envy-based notions for matroid-rank and general submodular valuations, one based on the idea of transferability and the other on marginal values.
We show that our notions can be satisfied via generalizations of rules such as picking sequences and maximum weighted Nash welfare.
In addition, we introduce welfare measures based on harmonic numbers, and show that variants of maximum weighted harmonic welfare offer stronger fairness guarantees than maximum weighted Nash welfare under matroid-rank valuations.
\end{abstract}

\section{Introduction}

Fair division refers to the study of how to fairly allocate resources among agents with possibly differing preferences.
Over the 70 years since \citet{Steinhaus48} initiated a mathematical framework of fair division, the field has given rise to numerous fairness notions and procedures for computing fair outcomes in a variety of scenarios \citep{BramsTa96,Moulin03}.
For instance, in the common scenario of allocating indivisible goods, the notion \emph{envy-freeness up to one good (EF1)} has emerged as a standard benchmark.
An allocation of the goods satisfies EF1 if any envy that an agent has toward another agent can be eliminated by removing some good in the latter agent's bundle.
Even when agents have arbitrary monotonic valuations over the goods, an EF1 allocation always exists and can be found in polynomial time \citep{LiptonMaMo04}.

The definitions of many fairness notions in the literature, including EF1, inherently assume that all agents have the same entitlement to the resource.
In the past few years, several researchers have examined a more general model in which different agents may have different \emph{weights} reflecting their entitlements to the goods \citep{FarhadiGhHa19,AzizMoSa20,BabaioffEzFe21,BabaioffNiTa21,ChakrabortyIgSu21,ChakrabortyScSu21,ChakrabortySeSu22,GargKuKu20,GargHuVe21,GargHuMu22,ScarlettTeZi21,HoeferScVa22,SuksompongTe22,ViswanathanZi22}.
This model allows us to capture settings such as inheritance division, in which relatives are typically entitled to unequal shares of the legacy, as well as resource allocation among groups or organizations of different sizes.
\citet{ChakrabortyIgSu21} generalized EF1 to \emph{weighted EF1 (WEF1)}: for instance, if Alice's weight is three times as high as Bob's, then WEF1 stipulates that, after removing some good from Bob's bundle, Alice should have at least three times as much value for her own bundle as for Bob's.
The same authors demonstrated that if agents have \emph{additive} valuations over the goods, a complete WEF1 allocation always exists and can be computed efficiently.\footnote{An allocation is called \emph{complete} if it allocates all of the goods.}
However, they provided the following example showing that existence is no longer guaranteed once we move beyond additivity.
\begin{example}[\citet{ChakrabortyIgSu21}]
\label{ex:WEF1}
Consider an instance with $n = 2$ agents whose weights are $w_1 = 1$ and $w_2 = 2$, and $m \ge 6$ goods. Agent~$1$ has an additive valuation with value~$1$ for every good, whereas agent~$2$ has value $0$ for the empty bundle and $1$ for any nonempty bundle.
If agent~$1$ is allocated more than one good, then agent~$2$ has weighted envy toward agent~$1$ even after removing any good from agent~$1$'s bundle. 
Thus, assuming that all goods need to be allocated, agent~$2$ obtains at least $m-1$ goods. 
Again, this causes weighted envy according to WEF1, this time from agent~$1$ toward agent~$2$. 
Hence, no complete WEF1 allocation exists in this instance.
\end{example}
The impossibility result illustrated in this example still holds even if WEF1 is relaxed to \emph{weak WEF1 (WWEF1)}, whereby an agent is allowed to either remove a good from the other agent's bundle or copy one such good into her own bundle,\footnote{In fact, the impossibility persists even with WWEF$c$ for any constant $c$; see the discussion after Proposition~8.1 of \citet{ChakrabortyIgSu21}.} and stands in contrast to the aforementioned EF1 guarantee in the unweighted setting \citep{LiptonMaMo04}.
In light of these observations, \citet{ChakrabortyIgSu21} left open the direction of identifying appropriate envy-based notions for non-additive valuations.
Note also that the valuations in Example~\ref{ex:WEF1} are particularly simple: both agents have \emph{binary submodular} valuations, that is, submodular valuations\footnote{See the definition of submodularity in \Cref{sec:prelims}.} in which the marginal gain from receiving any single good is either $0$ or $1$.
Binary submodular valuations are also known as \emph{matroid-rank valuations}, and have been studied in a number of recent fair division papers, mostly in the unweighted setting \citep{BabaioffEzFe21-dichotomous,BarmanVe21,BarmanVe22,BenabbouChIg21,GokoIgKa22,ViswanathanZi22,ViswanathanZi22-unweighted}.\footnote{An exception is the recent work of \citet{ViswanathanZi22}, which deals with the weighted setting.}
Such valuations arise in settings such as the allocation of course slots to students, or apartments in public housing estates to ethnic groups \citep{BenabbouChIg21}.

In this paper, we explore weighted envy-freeness for both matroid-rank and general submodular valuations.
We propose new envy-based notions and show that they can be satisfied in these settings, not only via extensions of existing algorithms, but also via new rules.
For the sake of generality, we define our notions based on the notion WEF$(x,1-x)$ of \citet{ChakrabortySeSu22}.
With additive valuations, given a parameter $x\in [0,1]$, WEF$(x,1-x)$ allows each agent~$i$ to subtract $x$ times the value of some good in another agent~$j$'s bundle from $i$'s value for this bundle, and add $(1-x)$ times the value of this good to the value of $i$'s own bundle.
WEF1 corresponds to WEF$(1,0)$, and higher values of $x$ yield notions that favor lower-weight agents.
\citet{ChakrabortySeSu22} showed that for any instance with additive valuations and any $x$, a complete WEF$(x,1-x)$ allocation always exists; on the other hand, they proved that for any distinct $x$ and $x'$, there is an instance with binary additive valuations and identical goods in which no complete allocation satisfies both WEF$(x,1-x)$ and WEF$(x',1-x')$.

\subsection{Our Contributions}

In \Cref{sec:prelims}, we introduce two new families of weighted envy-freeness notions.
The first family, \emph{TWEF$(x,1-x)$}, is based on the concept of ``transferability'':\footnote{This concept has been discussed by \citet{BenabbouChIg21} and \citet{ChakrabortyIgSu21}.} we consider the condition TWEF$(x,1-x)$ from agent~$i$ to agent~$j$ to be violated only if the WEF$(x,1-x)$ condition between $i$ and $j$ fails \emph{and} $i$'s value for her own bundle does not increase even if all goods from $j$'s bundle are transferred to $i$'s bundle.
TWEF$(x,1-x)$ handles instances such as the one in Example~\ref{ex:WEF1}, where an agent could be unsatisfied with respect to WEF$(x,1-x)$ even if she already receives her maximum possible utility.
Our second family, \emph{WMEF$(x,1-x)$}, is an extension of the notion \emph{marginal EF1 (MEF1)} of \citet{CaragiannisKuMo19} from the unweighted setting.
The idea is that, instead of agent~$i$ considering her value for agent~$j$'s bundle as in WEF$(x,1-x)$, agent~$i$ considers her \emph{marginal} value of $j$'s bundle when added to $i$'s own bundle.
While TWEF$(x,1-x)$ is stronger than WMEF$(x,1-x)$, we show that the former notion is suitable primarily for matroid-rank valuations, whereas the latter can be guaranteed even for general submodular valuations. 

In Sections~\ref{sec:pickseq} and \ref{sec:Nash}, we allow agents to have arbitrary submodular valuations.
In Section~\ref{sec:pickseq}, we investigate \emph{picking sequences}, which let agents take turns picking a good according to a specified agent ordering until the goods run out.
While previous work on picking sequences typically assumed that agents have additive valuations, this assumption may be violated in real-world applications of picking sequences, such as the allocation of ministries to political parties. 
We adjust picking sequences to submodular valuations by letting agents pick a good with the highest \emph{marginal} gain in each of their turns.
We show that for every $x$, the output of the adjusted version of the picking sequence proposed by \citet{ChakrabortySeSu22} with parameter~$x$ satisfies WMEF$(x,1-x)$; this generalizes their result from the weighted additive setting.
As a corollary, in the unweighted submodular setting, the adjusted version of the commonly studied round-robin algorithm produces an MEF1 allocation.
In Section~\ref{sec:Nash}, we  consider the \emph{maximum weighted Nash welfare (MWNW)} rule, which chooses an allocation that maximizes the weighted product of the agents' utilities.
Although prior results rule out the possibility that MWNW implies WMEF$(x,1-x)$ for any~$x$, we show that an MWNW allocation always satisfies a relaxation of WMEF$(x,1-x)$ called \emph{weak weighted MEF1 (WWMEF1)}.
This extends a corresponding result of \citet{ChakrabortyIgSu21} from the weighted additive setting, which is in turn a generalization of the prominent result by \citet{CaragiannisKuMo19} in the unweighted additive setting.

Next, in Sections~\ref{sec:transfer} and \ref{sec:harmonic}, we focus on agents with matroid-rank valuations.
In Section~\ref{sec:transfer}, we extend the ``transfer algorithm'' of \citet{BenabbouChIg21} from the unweighted setting, and prove that our algorithm returns a clean\footnote{An allocation is \emph{clean} if no good can be discarded from an agent's bundle without decreasing the agent's utility \citep{BenabbouChIg21}. The term \emph{non-redundant} has also been used with the same meaning \citep{BabaioffEzFe21-dichotomous}.} TWEF$(x,1-x)$ allocation that maximizes the unweighted utilitarian welfare.
While Benabbou et al.'s potential function argument can be generalized to show that our algorithm terminates, it is insufficient for establishing polynomial-time termination in our setting with different weights; hence, we devise a more elaborate argument for this purpose.
Finally, in Section~\ref{sec:harmonic}, we introduce new welfare measures based on harmonic numbers and their variants.\footnote{The harmonic welfare measure is the basis of the \emph{proportional approval voting (PAV)} rule in the context of \emph{approval-based committee voting} \citep{LacknerSk22}.
To the best of our knowledge, we are the first to consider this measure in the context of fair division.}
Perhaps surprisingly, we demonstrate that the maximum-welfare rules based on our measures offer stronger fairness guarantees compared to MWNW under matroid-rank valuations.
In particular, while MWNW does not imply WEF$(x,1-x)$ for any $x$ even with binary additive valuations and identical goods \citep{ChakrabortySeSu22}, we prove that a clean \emph{maximum weighted harmonic welfare} allocation parameterized by $x$ satisfies TWEF$(x,1-x)$ for matroid-rank valuations (and therefore WEF$(x,1-x)$ for binary additive valuations).\footnote{To further exhibit the potential of harmonic welfare, we show in Appendix~\ref{app:harmonic-more} that, in the unweighted additive setting, if each agent's value for each good is an integer, then a maximum harmonic welfare allocation always satisfies EF1.}

\section{Preliminaries}
\label{sec:prelims}

Let $N = [n]$ be the set of agents and $G = \{g_1,\dots,g_m\}$ be the set of indivisible goods, where $[k] := \{1,\dots,k\}$ for any positive integer $k$.
A \emph{bundle} refers to a subset of $G$.
Each agent $i\in N$ has a \emph{weight} $w_i > 0$ representing her entitlement, and a \emph{valuation function} (or \emph{utility function}) $v_i:2^G\rightarrow\mathbb{R}_{\ge 0}$.
The setting where all of the weights are equal is sometimes referred to as the \emph{unweighted setting}.
For convenience, we write $v_i(g)$ instead of $v_i(\{g\})$ for a single good~$g$.
We assume throughout the paper that $v_i$ is 
\begin{itemize}
\item \emph{monotone}: $v_i(G') \le v_i(G'')$ for all $G'\subseteq G''\subseteq G$;
\item \emph{submodular}: $v_i(G'\cup\{g\}) - v_i(G') \ge v_i(G''\cup \{g\}) - v_i(G'')$ for all $G'\subseteq G''\subseteq G$ and $g\in G\setminus G''$;
\item \emph{normalized}: $v_i(\emptyset) = 0$.
\end{itemize}
The function $v_i$ is said to be \emph{matroid-rank} (or \emph{binary submodular}) if it is submodular and $v_i(G'\cup\{g\}) - v_i(G')\in \{0,1\}$ for all $G'\subseteq G$ and $g\in G\setminus G'$.
Moreover, $v_i$ is \emph{additive} if $v_i(G') = \sum_{g\in G'}v_i(g)$ for all $G'\subseteq G$, and \emph{binary additive} if it is additive and $v_i(g) \in \{0,1\}$ for all $g\in G$.
An \emph{instance} consists of the set of agents~$N$, the set of goods~$G$, and the agents' weights $(w_i)_{i\in N}$ and valuation functions $(v_i)_{i\in N}$.

An \emph{allocation}~$\mathcal{A}$ is a list of bundles $(A_1,\dots,A_n)$ such that no two bundles overlap, where bundle~$A_i$ is assigned to agent~$i$.
The allocation is \emph{complete} if $\bigcup_{i\in N}A_i = G$.
It is \emph{Pareto-optimal (PO)} if there does not exist another allocation $\mathcal{A}'$ such that $v_i(A_i') \ge v_i(A_i)$ for all $i\in N$ and the inequality is strict for at least one $i\in N$; such an allocation~$\mathcal{A}'$ is said to \emph{Pareto-dominate} $\mathcal{A}$.
For each $i\in N$, we denote by $N^+_\mathcal{A}$ the subset of agents receiving positive utility from $\mathcal{A}$.
The \emph{unweighted utilitarian welfare} of $\mathcal{A}$ is defined as $\sum_{i\in N}v_i(A_i)$.

For a bundle $G'\subseteq G$, we define the \emph{marginal gain} of a good $g\not\in G'$ for agent~$i$ as $\mgaini{G'}{g} := v_i(G' \cup \{g\})-v_i(G')$. 
Similarly, the \emph{marginal loss} of a good $g \in G'$ for agent~$i$ is defined as $\mlossi{G'}{g} := v_i(G') - v_i(G'\setminus\{g\})$. 
An allocation $\mathcal{A}$ is called \emph{clean} (or \emph{non-redundant}) if for any $i\in N$ and any $g\in A_i$, it holds that $\mlossi{A_i}{g} > 0$.
For matroid-rank valuations, $\mathcal{A}$ is clean if and only if $v_i(A_i) = |A_i|$ for all $i\in N$ \citep[Prop.~3.3]{BenabbouChIg21}.

We now introduce our first family of fairness notions, TWEF$(x,y)$.\footnote{This is not to be confused with the notion that \citet{ChakrabortyIgSu21} informally termed ``transfer WEF1'', which allows an agent to transfer a good from another agent's bundle over to her own bundle (instead of only making a copy of it as in weak WEF1), or with the notion ``typewise (M)EF1'' of \citet{BenabbouChEl19} in a different setting.}

\begin{definition}[TWEF$(x,y)$]
For $x,y\in[0,1]$, an allocation $\mathcal{A}$ is said to satisfy \emph{transferable WEF$(x,y)$ (TWEF$(x,y)$)} if, for any pair of agents $i, j \in N$, either $v_i(A_i) = v_i(A_i \cup A_j)$ or there exists $g\in A_j$ such that 
\[
\frac{v_i(A_i) + y\cdot \mgaini{A_i}{g}}{w_i} \geq \frac{v_i(A_j) - x\cdot \mlossi{A_j}{g}}{w_j}.
\]
\end{definition}

By submodularity, the condition $v_i(A_i) = v_i(A_i\cup A_j)$ is equivalent to the condition that $v_i(A_i) = v_i(A_i\cup\{g\})$ for every $g\in A_j$.

For any $x$ and $y$, if valuations are additive, then TWEF$(x,y)$ reduces to the notion WEF$(x,y)$ of \citet{ChakrabortySeSu22}.
We will mostly be concerned with the case where $y = 1-x$.
As we will see, TWEF$(x,1-x)$ is a useful notion for matroid-rank valuations.
However, like WEF$(x,1-x)$, it can be too demanding for general submodular valuations.
For instance, in Example~\ref{ex:WEF1}, if agent~$2$ has value $1+(|G'|-1) \cdot \varepsilon$ for any nonempty bundle $|G'|$, where $\varepsilon > 0$ is a small constant, then the condition $v_i(A_i) = v_i(A_i\cup A_j)$ becomes impotent and a complete TWEF$(x,1-x)$ allocation does not exist for any~$x$.
The second family of notions that we propose, which is based on the \emph{marginal EF1 (MEF1)} notion of \citet{CaragiannisKuMo19},\footnote{In the unweighted setting, an allocation satisfies MEF1 if for all $i,j\in N$, there exists $g\in A_j$ such that $v_i(A_i) \ge v_i(A_i\cup A_j\setminus\{g\}) - v_i(A_i)$.} does not suffer from this shortcoming.

\begin{definition}[WMEF$(x,y)$]
For $x,y\in[0,1]$, an allocation $\mathcal{A}$ is said to satisfy \emph{WMEF$(x,y)$} if, for any pair of agents $i, j \in N$, either $A_j = \emptyset$ or there exists $g\in A_j$ such that 
\[
\frac{v_i(A_i) + y\cdot \mgaini{A_i}{g}}{w_i} \geq \frac{v_i(A_i\cup A_j) - v_i(A_i) - x\cdot \mlossi{A_i\cup A_j}{g}}{w_j}.
\]
\end{definition}

If valuations are additive, WMEF$(x,y)$ reduces to WEF$(x,y)$ for any $x$ and $y$.
On the other hand, if all agents have the same weight, WMEF$(x,1-x)$ reduces to MEF1 only if $x = 1$.
The following proposition establishes an implication relationship between our two families of notions.

\begin{proposition}
\label{prop:TWEF-WMEF}
For $x,y\in[0,1]$, any TWEF$(x, y)$ allocation is also WMEF$(x, y)$. 
\end{proposition}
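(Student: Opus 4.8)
The plan is to verify the WMEF$(x,y)$ condition for an arbitrary ordered pair of agents $i,j$, assuming the allocation $\mathcal{A}$ satisfies TWEF$(x,y)$. The observation that makes this tractable is that the left-hand sides of the two defining inequalities are literally identical, namely $(v_i(A_i) + y\cdot \mgaini{A_i}{g})/w_i$. Hence, once I fix the witnessing good $g\in A_j$, it suffices to compare only the right-hand sides; and since $w_j > 0$, this reduces to a weight-free combinatorial inequality about $v_i$ alone.

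First I would dispose of the degenerate cases. If $A_j=\emptyset$, the WMEF condition holds vacuously, so assume $A_j\neq\emptyset$. By TWEF$(x,y)$, either $v_i(A_i)=v_i(A_i\cup A_j)$, or some $g\in A_j$ satisfies the TWEF inequality. In the former case, submodularity forces $\mgaini{A_i}{g}=0$ for every $g\in A_j$, so the WMEF left-hand side equals $v_i(A_i)/w_i\ge 0$ while its right-hand side is $-x\cdot\mlossi{A_i\cup A_j}{g}/w_j\le 0$; thus WMEF holds for any choice of $g\in A_j$.

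The substantive case is when the TWEF inequality holds for some $g\in A_j$. Reusing this same good and cancelling the common left-hand side and the common denominator $w_j$, I am left to prove
\[
v_i(A_j) - x\cdot\mlossi{A_j}{g} \;\ge\; v_i(A_i\cup A_j) - v_i(A_i) - x\cdot\mlossi{A_i\cup A_j}{g}.
\]
Expanding both marginal losses via their definitions and rearranging, this is equivalent to the assertion that
\[
v_i(A_i) \;\ge\; (1-x)\bigl[v_i(A_i\cup A_j) - v_i(A_j)\bigr] + x\bigl[v_i(A_i\cup(A_j\setminus\{g\})) - v_i(A_j\setminus\{g\})\bigr].
\]
The right-hand side is a convex combination (here is where $x\in[0,1]$ enters) of the marginal value of $A_i$ when added on top of $A_j$ and when added on top of $A_j\setminus\{g\}$, respectively. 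By submodularity, each such marginal value is at most the marginal value of $A_i$ added to the empty set, i.e., at most $v_i(A_i)$; the convex-combination bound then yields the inequality.

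I expect the only real obstacle to be the bookkeeping in this last case: correctly expanding $\mlossi{A_j}{g}$ and $\mlossi{A_i\cup A_j}{g}$ — using that $(A_i\cup A_j)\setminus\{g\}=A_i\cup(A_j\setminus\{g\})$ because $A_i$ and $A_j$ are disjoint — and then recognizing the rearranged expression as a convex combination, so that the set version of submodularity, $v_i(A_i\cup T)-v_i(T)\le v_i(A_i)$ for every bundle $T$ (obtained by adding the elements of $A_i$ one at a time), applies uniformly to both bracketed terms.
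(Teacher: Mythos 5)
Your proof is correct and takes essentially the same route as the paper's: both case-split on the two clauses of TWEF$(x,y)$ (with the paper likewise handling $A_j=\emptyset$ and the sign argument in the $v_i(A_i)=v_i(A_i\cup A_j)$ case), reuse the same witnessing good $g$, and reduce to comparing the right-hand-side numerators via submodularity. Your convex-combination rearrangement, bounding $v_i(A_i\cup T)-v_i(T)\le v_i(A_i)$ for $T=A_j$ and $T=A_j\setminus\{g\}$, merely repackages the paper's two submodularity steps ($v_i(A_i)+v_i(A_j\setminus\{g\})\ge v_i(A_i\cup A_j\setminus\{g\})$ and $\mlossi{A_j}{g}\ge\mlossi{A_i\cup A_j}{g}$), so the arguments coincide in substance.
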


\begin{proof}
Let $\mathcal{A}$ be a TWEF$(x,y)$ allocation, and consider any $i,j\in N$. 
By definition of TWEF$(x,y)$, either $v_i(A_i) = v_i(A_i\cup A_j)$ or there exists $g\in A_j$ such that $\frac{v_i(A_i) + y\cdot \mgaini{A_i}{g}}{w_i} \geq \frac{v_i(A_j) - x\cdot \mlossi{A_j}{g}}{w_j}$.
Assume first that the latter holds.
Since $v_i$ is submodular, we have $v_i(A_i) + v_i(A_j\setminus\{g\}) \ge v_i(A_i\cup A_j\setminus\{g\})$ and $\mlossi{A_j}{g} \geq \mlossi{A_i \cup A_j}{g}$. 
It follows that 
\begin{align*}
    \frac{v_i(A_i) + y\cdot\mgaini{A_i}{g}}{w_i} &\geq \frac{v_i(A_j)-x\cdot\mlossi{A_j}{g}}{w_j}\\
    &=\frac{v_i(A_j\setminus\{g\}) + (1-x)\cdot\mlossi{A_j}{g}}{w_j} \\
    &\geq \frac{v_i(A_i \cup A_j\setminus\{g\})-v_i(A_i)+(1-x)\cdot\mlossi{A_i \cup A_j}{g}}{w_j}\\ 
    &= \frac{v_i(A_i\cup A_j) - v_i(A_i) - x\cdot\mlossi{A_i \cup A_j}{g}}{w_j}.
\end{align*}
Hence, the WMEF$(x,y)$ condition between agents $i$ and $j$ is fulfilled.

Next, assume that $v_i(A_i) = v_i(A_i \cup A_j)$.
If $A_j = \emptyset$, then the WMEF$(x,y)$ condition between agents $i$ and $j$ is automatically fulfilled.
Otherwise, for any good $g\in A_j$, we have
\begin{align*}
\frac{v_i(A_i)+y\cdot\mgaini{A_i}{g}}{w_i} 
\geq 0 
&\geq \frac{-x\cdot\mlossi{A_i \cup A_j}{g}}{w_j} \\
&= \frac{v_i(A_i \cup A_j) - v_i(A_i)-x\cdot\mlossi{A_i \cup A_j}{g}}{w_j},
\end{align*}
and WMEF$(x,y)$ between $i$ and $j$ is again fulfilled.
\end{proof}

Since the valuations that we consider in this paper are not necessarily additive, in order to reason about the running time of algorithms, we make the standard assumption that an algorithm can query the value of any agent~$i$ for any bundle~$G'\subseteq G$ in constant time.

\section{Picking Sequences}
\label{sec:pickseq}

In this section, we investigate \emph{picking sequences}, which are procedures wherein agents take turns picking a good according to a specified agent ordering until there are no more goods left.
For brevity, we will say that a picking sequence satisfies a fairness notion if the allocation that it returns always satisfies that notion.
With additive valuations, \citet{ChakrabortySeSu22} showed that for any $x\in [0,1]$, a picking sequence that assigns each subsequent pick to an agent~$i\in N$ with the smallest ratio $\frac{t_i + (1-x)}{w_i}$, where $t_i$ denotes the number of times agent~$i$ has picked so far, satisfies WEF$(x,1-x)$.
Our main result of this section extends their result to submodular valuations.
We make the specification that, in each turn, the agent picks a good that yields the highest \emph{marginal gain} with respect to the agent's current bundle, breaking ties arbitrarily.
More formally, if it is agent $i$'s turn, then $i$ chooses a good $g$ that maximizes $\mgaini{A_i}{g}$, where $A_i$ is the set of goods $i$ has picked in previous turns.

\begin{theorem}
\label{thm:pickseq}
Let $x\in [0,1]$.
Consider a picking sequence~$\pi_x$ such that, in each turn, the pick is assigned to an agent~$i\in N$ with the smallest ratio $\frac{t_i + (1-x)}{w_i}$, where $t_i$ denotes the number of times agent~$i$ has picked so far, and the agent picks a good that yields the highest marginal gain.
Then, under submodular valuations, $\pi_x$ satisfies WMEF$(x,1-x)$.
\end{theorem}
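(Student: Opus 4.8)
The plan is to fix an ordered pair of agents $i,j$ and verify the WMEF$(x,1-x)$ condition directly from the execution of $\pi_x$. If $A_j=\emptyset$ there is nothing to prove, so assume $A_j\neq\emptyset$. First I would record the marginal structure of $i$'s own bundle: listing $i$'s picks in the order taken as $p_1,\dots,p_{t_i}$ and setting $a_\ell:=\mgaini{\{p_1,\dots,p_{\ell-1}\}}{p_\ell}$, we have $v_i(A_i)=\sum_{\ell=1}^{t_i}a_\ell$. Since in each of its turns $i$ takes a good of maximum marginal gain, while submodularity makes the marginal gain of any fixed good non-increasing in $i$'s bundle and the pool of still-available goods only shrinks between $i$'s turns, the sequence $a_1\ge a_2\ge\cdots\ge a_{t_i}\ge 0$ is non-increasing; I would establish this as a one-line lemma.

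Next I would capture the interleaving enforced by the ratio rule. When an agent with current pick-count $t$ is selected, its ratio is $\frac{t+(1-x)}{w}=\frac{(t+1)-x}{w}$, so if we assign to the $s$-th pick of a weight-$w$ agent the priority $\frac{s-x}{w}$, then $\pi_x$ executes picks in non-decreasing order of priority. Letting $t_i^{(k)}$ be the number of goods $i$ holds at the moment $j$ makes its $k$-th pick, the fact that $i$'s next pick does not precede $j$'s $k$-th pick gives $\frac{(t_i^{(k)}+1)-x}{w_i}\ge\frac{k-x}{w_j}$, i.e.\ $t_i^{(k)}\ge\frac{w_i}{w_j}(k-x)-(1-x)$. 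The bridge between the two agents is then the domination bound: for every $k$ with $t_i^{(k)}\ge 1$ and every set $S\supseteq A_i$, $\mgaini{S}{q_k}\le a_{t_i^{(k)}}$, where $q_1,\dots,q_{t_j}$ are $j$'s picks in order. Indeed $q_k$ was available when $i$ made its $t_i^{(k)}$-th pick, so two applications of submodularity give $a_{t_i^{(k)}}\ge\mgaini{\{p_1,\dots,p_{t_i^{(k)}-1}\}}{q_k}\ge\mgaini{S}{q_k}$.

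With these in hand I would take the witnessing good in the WMEF condition to be $g=q_1$, $j$'s first pick. Writing $M:=v_i(A_i\cup A_j)-v_i(A_i)$ and $M':=v_i(A_i\cup(A_j\setminus\{q_1\}))-v_i(A_i)$ and peeling $q_1$ off last, $M-M'=\mlossi{A_i\cup A_j}{q_1}$, so the right-hand numerator of WMEF becomes $M-x\,\mlossi{A_i\cup A_j}{q_1}=M'+(1-x)\,\mlossi{A_i\cup A_j}{q_1}$. Since $\mlossi{A_i\cup A_j}{q_1}\le\mgaini{A_i}{q_1}$ by submodularity, and since adding $q_2,\dots,q_{t_j}$ in order and invoking the domination bound gives $M'\le\sum_{k=2}^{t_j}a_{t_i^{(k)}}$, it suffices to prove
\[
\frac{v_i(A_i)+(1-x)\,\mgaini{A_i}{q_1}}{w_i}\ \ge\ \frac{\sum_{k=2}^{t_j}a_{t_i^{(k)}}+(1-x)\,\mgaini{A_i}{q_1}}{w_j}.
\]
This is the manoeuvre that turns the subtracted term $x\,\mlossi{A_i\cup A_j}{q_1}$ into exactly the slack needed to absorb the off-by-one between the two agents' pick counts.

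The hard part is this last inequality, a weighted counting argument over the non-increasing sequence $(a_\ell)$. Using $t_i^{(k)}\ge\frac{w_i}{w_j}(k-x)-(1-x)$ and monotonicity of $(a_\ell)$, each $a_{t_i^{(k)}}$ is dominated by an $a$-term deep in the sequence, and one charges each of $j$'s picks against a block of about $w_i/w_j$ of $i$'s picks so that $\frac1{w_j}\sum_{k\ge2}a_{t_i^{(k)}}\le\frac1{w_i}\sum_{\ell}a_\ell$. The delicate points are the rounding induced by non-integer $w_i/w_j$; the indices $k$ with $t_i^{(k)}=0$, which can occur only while $i$ has not yet picked and must be absorbed through the boundary good via $\mgaini{A_i}{q_k}\le a_1$; the role of the $(1-x)$ shift; and the case $w_i>w_j$, where the terms $(1-x)\,\mgaini{A_i}{q_1}$ do not cancel across the two sides and one must spend the bound $\mgaini{A_i}{q_1}\le a_1$ to close the gap. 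In the additive special case all the marginal quantities collapse to item values and this computation is precisely the WEF$(x,1-x)$ counting of \citet{ChakrabortySeSu22} for their picking sequence; my aim would be to reproduce their argument at the level of the marginal quantities $a_\ell$, with submodularity supplying each inequality that held with equality in the additive setting.
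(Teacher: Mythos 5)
Your scaffolding (the non-increasing marginal gains $a_1\ge a_2\ge\cdots$, the ratio-rule bound $t_i^{(k)}\ge\frac{w_i}{w_j}(k-x)-(1-x)$, and the domination bound $\mgaini{S}{q_k}\le a_{t_i^{(k)}}$ for $S\supseteq A_i$) matches the skeleton of the paper's proof, where these appear as the quantities $\alpha_\ell$, inequality \eqref{eq:picking-picks}, and inequality \eqref{eq:picking-utils}. But there is a genuine gap: your choice of witness good $g=q_1$, agent $j$'s \emph{first} pick, is wrong, and the mechanism you propose for the indices $k$ with $t_i^{(k)}=0$ --- ``absorbed through the boundary good via $\mgaini{A_i}{q_k}\le a_1$'' --- is false. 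Goods that $j$ picks \emph{before} $i$'s first turn are not available at $i$'s first turn, so their marginal value to $i$ is in no way bounded by $a_1$. Concretely, take additive valuations with $x=0$, $w_i=1$, $w_j=3$: then $j$'s first two picks have priorities $\frac{1}{3},\frac{2}{3}$, both below $i$'s first priority $1$, so $j$ picks $q_1,q_2$ before $i$ moves. Let $v_i(q_1)=0$, $v_i(q_2)=V>0$, and $v_i=0$ on all other goods, with $v_j$ ranking $q_1$ highest. Then $v_i(A_i)=0$ and $a_1=0$, and the WMEF$(0,1)$ condition with witness $q_1$ reads $0\ge V/3$, which fails --- while the witness $q_2$ gives $V\ge V/3$, which holds. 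So no correct argument can close your reduction; the theorem is true but not via $q_1$.

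The paper avoids this by taking the witness $g^*\in\argmax_{g\in A_j}\mgaini{A_i}{g}$. Its counting lemma (inherited from Chakraborty et al.'s inductive argument) naturally produces an asymmetric inequality,
\begin{align*}
y\cdot \max_{1\le r\le t_j}\beta_r + \sum_{\ell=1}^{t_j}\alpha_\ell \ \ge\ \rho\Bigl(\sum_{\ell=1}^{t_j}\beta_\ell - x\beta_1\Bigr),
\end{align*}
with the \emph{maximum} marginal gain $\max_r\beta_r=\mgaini{A_i}{g^*}$ on the left but the first pick's $\beta_1$ on the right; these are reconciled into a single witness by the one-sided bound $\beta_1\le \mgaini{A_i}{g^*}$, which only makes the right-hand side smaller. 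Your plan needs $y\cdot\beta_1$ on the left instead, and the deficit $y(\max_r\beta_r-\beta_1)$ is exactly what blows up in the counterexample. Everything else in your outline (including $\mlossi{A_i\cup A_j}{q_1}\le\mgaini{A_i}{q_1}$ and the direct end-of-run analysis, which legitimately substitutes for the paper's preservation lemma about adding goods to $A_i$) is sound; to repair the proof, redo your peeling step with $g^*$ in place of $q_1$ and carry out the deferred weighted counting argument, which is the inductive core you currently only gesture at.
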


For any $x$ and agents with equal weights, $\pi_x$ encompasses the popular \emph{round-robin algorithm} where the agents take turns in the order $1,2,\dots,n,1,2,\dots,n,1,2,\dots$, and WMEF$(1,0)$ reduces to MEF1 of \citet{CaragiannisKuMo19}.
We therefore have the following corollary, which is also new to the best of our knowledge.

\begin{corollary}
\label{cor:pickseq-unweighted}
Assume that all agents have equal weights and submodular valuations.
Suppose that in each turn of the round-robin algorithm, the picking agent picks a good with the highest marginal gain.
Then, the algorithm returns a complete MEF1 allocation.
\end{corollary}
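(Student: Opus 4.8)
The plan is to obtain \Cref{cor:pickseq-unweighted} as a direct specialization of \Cref{thm:pickseq}, so that the real work reduces to two observations: that the round-robin algorithm is an instance of the picking sequence $\pi_x$ at $x = 1$ in the equal-weights setting, and that WMEF$(1,0)$ collapses to MEF1 when all weights coincide.

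First I would verify that round-robin is a legitimate realization of $\pi_1$. Setting $x = 1$ turns the selection ratio $\frac{t_i + (1-x)}{w_i}$ into $\frac{t_i}{w_i}$, and with a common weight $w_i = w$ this is simply $t_i/w$; hence the next pick always goes to an agent who has picked the fewest times so far. Breaking ties in favour of the lowest index then forces the picking order $1, 2, \dots, n, 1, 2, \dots$, which is exactly round-robin. Since both procedures also let the active agent take a good of maximal marginal gain, round-robin coincides with one admissible run of $\pi_1$. Completeness is immediate, as the sequence keeps assigning goods until none remain.

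Second, I would check the algebraic reduction of the fairness guarantee. Plugging $x = 1$, $y = 0$, and $w_i = w_j$ into the WMEF$(x,y)$ condition makes the term $y\cdot\mgaini{A_i}{g}$ vanish and cancels the common weight, leaving $v_i(A_i) \ge v_i(A_i\cup A_j) - v_i(A_i) - \mlossi{A_i\cup A_j}{g}$. Expanding $\mlossi{A_i\cup A_j}{g} = v_i(A_i\cup A_j) - v_i(A_i\cup A_j\setminus\{g\})$ causes the $v_i(A_i\cup A_j)$ terms to telescope, yielding precisely $v_i(A_i) \ge v_i(A_i\cup A_j\setminus\{g\}) - v_i(A_i)$, the MEF1 inequality recorded in the footnote; moreover, the disjunct $A_j = \emptyset$ in WMEF matches the vacuous MEF1 case. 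Combining this equivalence with \Cref{thm:pickseq} applied at $x = 1$ then gives that round-robin returns a complete MEF1 allocation.

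There is essentially no deep obstacle here, since the corollary rides entirely on the theorem; the only points requiring care are confirming that a tie-breaking rule consistent with the ratio condition actually reproduces the canonical round-robin order, and correctly handling the empty-bundle disjunct when translating between WMEF$(1,0)$ and MEF1.
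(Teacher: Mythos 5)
Your proposal is correct, but it takes a different route from the paper's own proof. You derive \Cref{cor:pickseq-unweighted} as a pure specialization of \Cref{thm:pickseq}: round-robin with lowest-index tie-breaking is one admissible run of $\pi_1$ (at $x=1$ and equal weights the ratio $\frac{t_i+(1-x)}{w_i}$ becomes $t_i/w$, so the pick goes to an agent with fewest picks, and the theorem's guarantee holds for \emph{any} tie-breaking, so this is legitimate), and your algebraic reduction of WMEF$(1,0)$ with $w_i=w_j$ to the inequality $v_i(A_i)\ge v_i(A_i\cup A_j\setminus\{g\})-v_i(A_i)$, including the vacuous case $A_j=\emptyset$, is exactly right; this logical dependence is also non-circular, since the paper's proof of \Cref{thm:pickseq} does not use the corollary. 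The paper, by contrast, proves the corollary \emph{directly} and before the theorem: it matches agent~$j$'s picks $c_1,\dots,c_k$ against agent~$i$'s first $k$ picks $b_1,\dots,b_k$, uses the greedy choice ($i$ picks $b_\ell$ while $c_\ell$ is still available) together with submodularity to get $v_i(B_\ell)-v_i(B_{\ell-1})\ge v_i(A_i\cup C_{\ell-1}\cup\{c_\ell\})-v_i(A_i\cup C_{\ell-1})$, and telescopes over $\ell$ (dropping $i$'s first pick when arguing from $j$ toward $i$). What each approach buys: yours is shorter on paper and avoids duplicating work, but inherits the full weight of the theorem's machinery (the weighted inductive argument adapted from Chakraborty et al.); the paper's direct argument is elementary and self-contained, and is presented deliberately as a warm-up that exposes the submodularity ideas reused in the theorem's proof. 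The only points in your write-up requiring the care you already flagged—tie-breaking reproducing the canonical order, and the empty-bundle disjunct—are handled correctly.
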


As Corollary~\ref{cor:pickseq-unweighted} admits a more direct proof, which also illustrates the ideas that we will use to show \Cref{thm:pickseq}, we first present the (shorter) proof of Corollary~\ref{cor:pickseq-unweighted}.

\begin{proof}[Proof of Corollary~\ref{cor:pickseq-unweighted}]
Let $\mathcal{A}$ be the allocation produced by the round-robin algorithm, and consider any $i,j\in N$.
Assume without loss of generality that $i < j$.

We first establish the MEF1 condition from $i$ toward $j$.
Let $k := |A_j| \le |A_i|$, and suppose that agent~$j$ picks the goods in the order $c_1,c_2,\dots,c_k$.
Let $b_1,b_2,\dots,b_k$ be the first $k$ goods picked by agent~$i$ in this order.
For $0\le\ell\le k$, let $B_\ell = \{b_1,\dots,b_\ell\}$ and $C_\ell = \{c_1,\dots,c_\ell\}$ (so $B_0 = C_0 = \emptyset$).
For $1\le\ell\le k$, since agent~$i$ picks $b_\ell$ when $c_\ell$ is also available, it must be that 
\begin{align*}
v_i(B_{\ell}) - v_i(B_{\ell-1}) \ge v_i(B_{\ell-1}\cup \{c_\ell\}) - v_i(B_{\ell-1}).
\end{align*}
Moreover, since $B_{\ell-1} \subseteq A_i \subseteq A_i\cup C_{\ell-1}$, submodularity implies that 
\begin{align*}
v_i(B_{\ell-1}\cup \{c_\ell\}) - v_i(B_{\ell-1}) \ge v_i(A_i\cup C_{\ell-1}\cup\{c_\ell\}) - v_i(A_i\cup C_{\ell-1}).
\end{align*}
Combining the previous two inequalities yields
\begin{align*}
v_i(B_{\ell}) - v_i(B_{\ell-1}) \ge v_i(A_i\cup C_{\ell-1}\cup\{c_\ell\}) - v_i(A_i\cup C_{\ell-1}).
\end{align*}
Summing this over all $\ell\in[k]$, we get $v_i(B_k) \ge v_i(A_i\cup C_k) - v_i(A_i)$.
Since $B_k\subseteq A_i$ and $C_k = A_j$, it follows that $v_i(A_i) \ge v_i(A_i\cup A_j) - v_i(A_i)$, and the MEF1 condition from $i$ to $j$ is fulfilled.

The proof for the MEF1 condition from $j$ toward $i$ is almost identical: by ignoring the first good $g$ picked by agent~$i$ and applying the same argument, we have $v_j(A_j) \ge v_j(A_j\cup (A_i\setminus\{g\})) - v_j(A_j)$.
Thus, the MEF1 condition is again satisfied.
\end{proof}

In \Cref{app:roundrobin-EF1}, we provide an example demonstrating that the condition MEF1 in Corollary~\ref{cor:pickseq-unweighted} cannot be strengthened to EF1, even when agents have matroid-rank valuations.

We now establish \Cref{thm:pickseq} by augmenting the proof of \citet[Thm.~3.2]{ChakrabortySeSu22} from the additive setting with the ideas from our proof of Corollary~\ref{cor:pickseq-unweighted} and arguments involving submodularity.

\begin{proof}[Proof of \Cref{thm:pickseq}]
Fix two agents $i,j \in N$.
For convenience, we write $\pi$ instead of $\pi_x$.
For any prefix $P$ of $\pi$, if $i$ and $j$ pick $t_i$ and $t_j$ times in $P$, respectively, then it must be that $\frac{(t_j+(1-x))-1}{w_j} \le \frac{t_i+(1-x)}{w_i}$; otherwise the $t_j$-th pick of~$j$ should have been assigned to~$i$ instead.
That is, we have $t_i + (1-x) \ge \frac{w_i}{w_j}\cdot(t_j-x)$.
Using this property, we will show that the WMEF$(x,1-x)$ condition from $i$ to $j$ is satisfied after every prefix of $\pi$.

We first prove a general claim that, for any $x,y\in [0,1]$, if the WMEF$(x,y)$ condition from $i$ to $j$ is satisfied with bundles $A_i$ and $A_j$, and we add one good $h\not\in A_i\cup A_j$ to $A_i$, then the condition remains satisfied.
To this end, we show that for every good $g\in A_j$,
\begin{align}
\label{eq:WMEF-addonegood1}
v_i(A_i\cup\{h\}) + y\cdot\mgaini{A_i\cup\{h\}}{g} \ge v_i(A_i) + y\cdot\mgaini{A_i}{g}
\end{align}
and
\begin{align}
\label{eq:WMEF-addonegood2}
v_i(A_i\cup A_j) - v_i(A_i) &- x\cdot\mlossi{A_i\cup A_j}{g} \nonumber \\
&\ge v_i(A_i\cup A_j\cup\{h\}) - v_i(A_i\cup\{h\}) - x\cdot\mlossi{A_i\cup A_j\cup\{h\}}{g}.
\end{align}
From the definition of WMEF$(x,y)$, these two inequalities suffice to prove our claim. 
In order to prove inequality \eqref{eq:WMEF-addonegood1}, we observe that
\begin{align*}
\mgaini{A_i}{h} + y\cdot\mgaini{A_i\cup\{h\}}{g}
&\ge y\cdot \mgaini{A_i}{h} + y\cdot\mgaini{A_i\cup\{h\}}{g} \\
&= y\cdot (v_i(A_i\cup\{h,g\}) - v_i(A_i)) \\
&\ge y\cdot (v_i(A_i\cup\{g\}) - v_i(A_i)) \\
&= y\cdot \mgaini{A_i}{g}.
\end{align*}
Since $\mgaini{A_i}{h} = v_i(A_i\cup\{h\}) - v_i(A_i)$, this implies \eqref{eq:WMEF-addonegood1}.
For \eqref{eq:WMEF-addonegood2}, observe that
\begin{align*}
v_i(A_i\cup A_j) &- v_i(A_i) - x\cdot\mlossi{A_i\cup A_j}{g} \\
&= v_i(A_i\cup A_j\setminus\{g\}) - v_i(A_i) + (1-x)\cdot\mlossi{A_i\cup A_j}{g} \\
&\ge v_i(A_i\cup A_j\cup\{h\}\setminus\{g\}) - v_i(A_i\cup\{h\}) + (1-x)\cdot\mlossi{A_i\cup A_j\cup\{h\}}{g} \\
&= v_i(A_i\cup A_j\cup\{h\}) - v_i(A_i\cup\{h\}) - x\cdot \mlossi{A_i\cup A_j\cup\{h\}}{g},
\end{align*}
where the inequality follows from submodularity.

We are ready to prove \Cref{thm:pickseq}.
Let $\rho = w_i/w_j$ and $y = 1-x$.
From the previous paragraph, it is sufficient to show that the WMEF$(x,1-x)$ condition from $i$ to $j$ is fulfilled after every pick by agent~$j$.
Consider any pick by agent~$j$, and suppose that it is the agent's $t_j$-th pick.
We divide the sequence of picks up to this pick into \emph{phases}, where each phase $\ell\in[t_j]$ consists of the picks after agent~$j$'s $(\ell-1)$-th pick up to (and including) the agent's $\ell$-th pick. 
Let $A_i$ and $A_j$ be the bundle of agent~$i$ and $j$ after phase~$t_j$, respectively.
If $A_j = \emptyset$, then the WMEF$(x,1-x)$ condition from $i$ to $j$ holds trivially, so assume that $A_j \ne \emptyset$.
In addition, we introduce the following notation:
\begin{itemize}
\item $\tau_\ell := $ the number of times agent $i$ picks in phase $\ell$ (that is, between agent $j$'s
 $(\ell-1)$-th and $\ell$-th picks);
\item 
$\alpha_1^\ell, \alpha_2^\ell, \dots,\alpha_{\tau_\ell}^\ell := $
agent $i$'s \emph{marginal gain} for each good that she picks herself in phase $\ell$ with respect to the goods that she has already picked (including those in phase~$\ell$, if any);
\item 
$\alpha_\ell :=  \alpha_1^\ell + \cdots + \alpha_{\tau_\ell}^\ell = $ the total marginal gain of agent $i$ in phase $\ell$ with respect to the goods that she has picked in prior phases;
\item 
$\beta_\ell := $ agent $i$'s marginal gain for the good that agent $j$ picks at the end of phase $\ell$ \emph{with respect to~$A_i$}.
\end{itemize}
Note that $\alpha_1^\ell, \alpha_2^\ell, \dots,\alpha_{\tau_\ell}^\ell$ (and therefore $\alpha_\ell$) and $\beta_\ell$ are defined differently than in the proof of Theorem~3.2 of \citet{ChakrabortySeSu22}, as the valuations that we consider may be non-additive.

For any integer $s\in[t_j]$,
applying the condition in the first paragraph of our proof to the picking sequence up to and including phase $s$, we have
\begin{equation}
\label{eq:picking-picks}
y+\sum_{\ell=1}^s \tau_\ell ~\ge~ \rho(s-x)
\qquad\qquad \forall s\in [t_j].
\end{equation}
Every time it is agent $i$'s turn, she picks a good with the highest marginal gain with respect to her current bundle among the available goods. 
In particular, in each phase $\ell$, she picks $\tau_{\ell}$ goods each of which yields at least as high marginal gain to her as any good not yet picked by agent $j$. 
By submodularity, this implies
\begin{equation}
\label{eq:picking-utils}
\alpha_\ell
\geq \tau_\ell \cdot \max_{\ell\le r \le t_j} \beta_r
\qquad\qquad \forall \ell\in [t_j].
\end{equation}
Using \eqref{eq:picking-picks} and \eqref{eq:picking-utils}, the same inductive argument as in Chakraborty et al.'s proof of their Theorem~3.2 yields
\begin{align}
\label{eq:picking-summary}
y\cdot \max_{1\le r\le t_j}\beta_r + \sum_{\ell=1}^{t_j}
\alpha_{\ell}
~\ge~ \rho\left(\sum_{\ell=1}^{t_j}\beta_\ell - x\beta_1 \right).
\end{align}

Let $g^* \in \argmax_{g\in A_j}\mgaini{A_i}{g}$, and let $g_1$ be the first good picked by agent~$j$ (possibly $g_1 = g^*$).
Using \eqref{eq:picking-summary}, we get
\begin{align*}
(1-x)\cdot\mgaini{A_i}{g^*} + v_i(A_i) 
&\ge \frac{w_i}{w_j}\cdot\left(\sum_{g\in A_j}\mgaini{A_i}{g} - x\cdot\mgaini{A_i}{g_1}\right) \\
&\ge \frac{w_i}{w_j}\cdot\left(\sum_{g\in A_j}\mgaini{A_i}{g} - x\cdot\mgaini{A_i}{g^*}\right) \\
&= \frac{w_i}{w_j}\cdot\left(\sum_{g\in A_j\setminus\{g^*\}}\mgaini{A_i}{g} + (1-x)\cdot\mgaini{A_i}{g^*}\right) \\
&\ge \frac{w_i}{w_j}\cdot\left(v_i(A_i\cup A_j\setminus\{g^*\}) - v_i(A_i) + (1-x)\cdot\mgaini{A_i}{g^*}\right),
\end{align*}
where the second inequality follows from the definition of~$g^*$ and the last inequality from submodularity.
Consequently, we have
\begin{align*}
&\frac{v_i(A_i) + (1-x)\cdot\mgaini{A_i}{g^*}}{w_i} \\
&\ge \frac{v_i(A_i\cup A_j\setminus\{g^*\}) - v_i(A_i) + (1-x)\cdot\mgaini{A_i}{g^*}}{w_j} \\
&\ge \frac{v_i(A_i\cup A_j\setminus\{g^*\}) - v_i(A_i) + (1-x)\cdot\mgaini{A_i\cup A_j\setminus\{g^*\}}{g^*}}{w_j} \\
&= \frac{v_i(A_i\cup A_j) - v_i(A_i) - x\cdot\mlossi{A_i\cup A_j}{g^*}}{w_j}.
\end{align*}
Here, the second inequality holds by submodularity.
As a result, the WMEF$(x,1-x)$ condition between agents~$i$ and $j$ is fulfilled, completing the proof.
\end{proof}

\section{Nash Welfare}
\label{sec:Nash}

In this section, we turn our attention to maximum weighted Nash welfare (MWNW), a weighted extension of the well-studied maximum Nash welfare (MNW).
MWNW has been studied in several recent papers \citep{ChakrabortyIgSu21,ChakrabortyScSu21,ChakrabortySeSu22,GargKuKu20,GargHuVe21,GargHuMu22,SuksompongTe22,ViswanathanZi22}.

\begin{definition}[MWNW]
Given an instance, an allocation $\mathcal{A}$ is a \emph{maximum weighted Nash welfare (MWNW)} allocation if it maximizes the \emph{weighted Nash welfare} $\text{WNW}(\mathcal{A}) := \prod_{i\in N}v_i(A_i)^{w_i}$.
If the highest possible weighted Nash welfare is $0$, an MWNW allocation should maximize the number of agents receiving positive utility and, subject to that, maximize the weighted Nash welfare of these agents. 
\end{definition}

\citet{ChakrabortySeSu22} showed that, for any $x\in [0,1]$, there exists an instance with binary additive valuations and identical goods such that every MWNW allocation is not WEF$(x,1-x)$.
As a consequence, MWNW allocations cannot always satisfy WMEF$(x,1-x)$ for submodular valuations.
On the other hand, \citet{ChakrabortyIgSu21} proved that, under additive valuations, MWNW allocations satisfy \emph{weak WEF1 (WWEF1)}, which is weaker than WEF$(x,1-x)$ for every $x$ but still reduces to EF1 in the unweighted additive setting.
We extend their result to the submodular setting via a natural generalization of WWEF1.

\begin{definition}[WWMEF1]
An allocation $\mathcal{A}$ is said to satisfy \emph{weak weighted marginal envy-freeness up to one good (WWMEF1)} if for any pair of agents $i,j$ with $A_j\neq\emptyset$, there exists a good $g\in A_j$ such that
\[
\text{either } 
\frac{v_i(A_i)}{w_i} \geq \frac{v_i(A_i \cup A_j \setminus \{g\}) - v_i(A_i)}{w_j}
\text{ or }
\frac{v_i(A_i \cup \{g\})}{w_i} \geq \frac{v_i(A_i \cup A_j) - v_i(A_i)}{w_j}.
\]
\end{definition}

\begin{theorem}
 Under submodular valuations, any MWNW allocation satisfies WWMEF1 and PO.
\end{theorem}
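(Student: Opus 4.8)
The plan is to prove the two guarantees separately. Pareto-optimality is the easier half: I would argue that any allocation strictly maximizing $\prod_{i\in N}v_i(A_i)^{w_i}$ (or, in the degenerate case, first maximizing $|N^+_{\mathcal{A}}|$ and then the product over those agents) cannot be Pareto-dominated. Indeed, a Pareto-improving allocation weakly increases every $v_i(A_i)$ and strictly increases at least one of them; since the $w_i$ are positive, this weakly increases the number of agents with positive utility and, holding the support fixed, strictly increases the weighted product over the positive-utility agents. This contradicts maximality, so any MWNW allocation is PO.

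For WWMEF1, I would fix a pair $i,j$ with $A_j\neq\emptyset$ and exploit the optimality of $\mathcal{A}$ against a single-good transfer from $j$ to $i$. The key idea is: for each $g\in A_j$, consider the allocation $\mathcal{A}^g$ obtained by moving $g$ from $A_j$ to $A_i$. Comparing weighted Nash welfares of $\mathcal{A}$ and $\mathcal{A}^g$ yields an inequality relating $v_i(A_i\cup\{g\})/v_i(A_i)$ to $v_j(A_j)/v_j(A_j\setminus\{g\})$ (in the multiplicative regime), which after taking the appropriate power-weighted comparison becomes a bound of the form
\[
w_j\bigl(v_i(A_i\cup\{g\})-v_i(A_i)\bigr)\;\ge\; w_i\,\mlossi[j]{A_j}{g}\cdot\frac{v_i(A_i)}{v_j(A_j)}
\]
or an analogous linearized relation. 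I would then pick the good $g\in A_j$ that agent~$i$ values most highly as a marginal addition, i.e. $g^*\in\argmax_{g\in A_j}\mgaini{A_i}{g}$, and show that the defining disjunction of WWMEF1 holds for this $g^*$. The two branches of the WWMEF1 disjunction correspond to the two regimes that arise depending on whether $v_i(A_i)=0$ or $v_i(A_i)>0$; when $i$ has zero utility the first inequality must be handled via the tie-breaking rule maximizing $|N^+_{\mathcal{A}}|$, and when $i$ has positive utility the multiplicative optimality condition can be linearized through submodularity.

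The main obstacle will be replacing the additive telescoping used by \citet{ChakrabortyIgSu21} with a submodular argument. In the additive case one writes $v_i(A_i\cup A_j)-v_i(A_i)=\sum_{g\in A_j}v_i(g)$ and distributes the single-good slack across the sum; under mere submodularity this decomposition fails, and one instead only has $v_i(A_i\cup A_j)-v_i(A_i)\le\sum_{g\in A_j}\mgaini{A_i}{g}$, which points the wrong way. I expect the resolution to mirror the picking-sequence proof: use the extremal choice of $g^*$ together with submodularity inequalities such as $\mgaini{A_i}{g^*}\ge\mgaini{A_i\cup A_j\setminus\{g^*\}}{g^*}$ and $v_i(A_i\cup A_j\setminus\{g^*\})-v_i(A_i)\le\sum_{g\in A_j\setminus\{g^*\}}\mgaini{A_i}{g}$ to bridge from the per-good optimality inequalities to the marginal quantity $v_i(A_i\cup A_j)-v_i(A_i)$ appearing in the WWMEF1 condition. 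Carefully tracking which of the two WWMEF1 branches is certified—and confirming that the single transferred good $g^*$ simultaneously controls both the ``plus $g$'' term on the left and the ``minus $g$'' term on the right—will be the delicate part of the argument.
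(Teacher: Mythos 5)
Your PO argument matches the paper's, and your overall WWMEF1 strategy (contradiction via a single-good transfer and a weighted-Nash-welfare comparison) is also the paper's. However, your choice of the transferred good is exactly the step that fails. You propose $g^*\in\argmax_{g\in A_j}\mgaini{A_i}{g}$, but the good with the largest marginal gain for $i$ may simultaneously carry essentially all of $j$'s utility, so transferring it can strictly \emph{decrease} the weighted Nash welfare even when WWMEF1 is violated; the per-good optimality inequality at that good then yields no contradiction. (Concretely, with $A_j=\{a,b\}$, if $\mgaini{A_i}{a}$ is large but $\mloss{j}{A_j}{a}$ is close to $v_j(A_j)$, while $\mloss{j}{A_j}{b}=0<\mgaini{A_i}{b}$, then only transferring $b$ helps.) The paper instead sets $B=\{g\in A_j:\mgaini{A_i}{g}>0\}$ and picks $g^*\in\argmin_{g\in B}\frac{\mloss{j}{A_j}{g}}{\mgaini{A_i}{g}}$; the mediant inequality bounds this ratio by $\bigl(\sum_{g\in A_j}\mloss{j}{A_j}{g}\bigr)\big/\bigl(\sum_{g\in A_j}\mgaini{A_i}{g}\bigr)$, and two submodularity facts---$\sum_{g\in A_j}\mloss{j}{A_j}{g}\le v_j(A_j)$ and $\sum_{g\in A_j}\mgaini{A_i}{g}\ge v_i(A_i\cup A_j)-v_i(A_i)$ (as well as its variant with $g'$ removed)---convert this into the bound needed against the negated WWMEF1 inequalities. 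This also corrects your worry that $v_i(A_i\cup A_j)-v_i(A_i)\le\sum_{g\in A_j}\mgaini{A_i}{g}$ ``points the wrong way'': in the ratio framework it lower-bounds the denominator, which is precisely the useful direction, so no substitute for additive telescoping is needed. Note too that your displayed ``linearized relation'' has the inequality reversed relative to what MWNW optimality actually gives: optimality upper-bounds $i$'s gain in terms of $j$'s loss, since otherwise the transfer would improve the welfare.

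Your case structure is also misaligned. The two branches of the WWMEF1 disjunction do not correspond to $v_i(A_i)=0$ versus $v_i(A_i)>0$; in the paper they correspond to $w_i\ge w_j$ versus $w_i<w_j$, where Bernoulli's inequality is applied to whichever factor carries the exponent ratio exceeding $1$ (the gain factor when $w_i\ge w_j$, the loss factor when $w_i<w_j$), each branch being launched from one of the two negated inequalities. Zero utilities are handled in an entirely separate scenario $N^+_\mathcal{A}\subsetneq N$ via the lexicographic definition of MWNW (first maximize the number of positive-utility agents), including a short submodularity argument with two distinct goods $g',g''\in A_j$ to rule out the case $i\notin N^+_\mathcal{A}$, $j\in N^+_\mathcal{A}$. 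Finally, the negation of WWMEF1 must be taken with the correct quantifier---both defining inequalities fail for \emph{every} $g\in A_j$---since the argument needs to invoke them at the specially chosen $g^*$; your plan of directly certifying the disjunction at a pre-selected good would require this structure in any case.
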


\begin{proof}
    Let $\mathcal{A}$ be an MWNW allocation, and recall that $N^+_\mathcal{A}$ denotes the set of agents who receive positive utility from~$\mathcal{A}$.
    We first prove the PO property.  
    If $\mathcal{A}$ were not PO, there would exist an allocation $\widehat{\mathcal{A}}$ such that $v_j(\widehat{A}_j) > v_j(A_j)$ for some $j \in N$ and $v_i(\widehat{A}_i) \geq v_i(A_i)$ for every $i \in N \setminus \{j\}$. 
    If $j \in N \setminus N^+_\mathcal{A}$, we would have $v_i(\widehat{A}_i) > 0$ for every $i \in N^+_\mathcal{A} \cup \{j\}$, contradicting the assumption that $N^+_\mathcal{A}$ is a largest subset of agents to whom it is possible to give positive utility simultaneously. 
    On the other hand, if $j \in N^+_\mathcal{A}$, we would have $\prod_{i \in N^+_\mathcal{A}} v_i(\widehat{A}_i)^{w_i} > \prod_{i \in N^+_\mathcal{A}} v_i(A_i)^{w_i}$, which would mean that $\mathcal{A}$ does not maximize the weighted Nash welfare of the agents in $N^+_\mathcal{A}$, again a contradiction.
    Therefore, $\mathcal{A}$ is PO.

    Next, we proceed to establish the WWMEF1 property. 
    Following the approach of \citet{CaragiannisKuMo19} and \citet{ChakrabortyIgSu21}, we first prove that $\mathcal{A}$ is WWMEF1 for the scenario $N^+_\mathcal{A} = N$ and then address the case where $N^+_\mathcal{A} \neq N$. 
    
    Assume that $N^+_\mathcal{A} = N$, and suppose for contradiction that $\mathcal{A}$ is not WWMEF1.
    This means that there exists a pair of agents $i, j \in N$ such that the WWMEF1 condition between $i$ and $j$ is violated. 
    That is, $A_j\ne\emptyset$, and for every good $g\in A_j$ it holds that
    \begin{equation}
    \label{eq:WWMEF1-1}
    \frac{v_i(A_i)}{w_i} < \frac{v_i(A_i \cup A_j \setminus \{g\}) - v_i(A_i)}{w_j}    
    \end{equation}
    and
    \begin{equation}
    \label{eq:WWMEF1-2}
    \frac{v_i(A_i \cup \{g\})}{w_i} < \frac{v_i(A_i \cup A_j) - v_i(A_i)}{w_j}. 
    \end{equation}
    We will construct another allocation $\mathcal{A}'$ obtained by transferring a good $g^*$ (to be chosen later) from $A_j$ to~$A_i$, so that $A'_i = A_i \cup \{g^*\}$ and $A'_j = A_j \setminus \{g^*\}$, and the bundles of all other agents remain unchanged. 
    Then, we have
    \begin{align}
        \frac{\text{WNW}(\mathcal{A}')}{\text{WNW}(\mathcal{A})} 
        & = \left( \frac{v_i(A_i \cup \{ g^*\})}{v_i(A_i)} \right)^{w_i} \left( \frac{v_j(A_j \setminus \{g^*\})}{v_j(A_j)} \right)^{w_j} \nonumber \\
        & = \left( \frac{v_i(A_i) + \mgaini{A_i}{g^*}}{v_i(A_i)} \right)^{w_i} \left( \frac{v_j(A_j) - \mloss{j}{A_j}{g^*}}{v_j(A_j)} \right)^{w_j} \nonumber  \\
        & = \frac{(v_i(A_i) + \mgaini{A_i}{g^*})^{w_i} (v_j(A_j) - \mloss{j}{A_j}{g^*})^{w_j} }{v_i(A_i)^{w_i} v_j(A_j)^{w_j}}. \label{thm:mwnw_wwmef1-wnwratio}
    \end{align}
    
    By Lemma C.5 of \citet{CaragiannisKuMo19} (which is a simple application of submodularity), it holds that 
    \begin{equation}
    \label{eq:remove-one}
    \sum_{g \in A_j} \mloss{j}{A_j}{g} \leq v_j(A_j).
    \end{equation}
    Moreover, by submodularity, we have that for every $g'\in A_j$,
    \begin{equation}
    \label{eq:submod-1}
    \sum_{g \in A_j} \mgaini{A_i}{g} 
    \geq v_i(A_i \cup A_j \setminus \{g'\}) - v_i(A_i) + \mgaini{A_i}{g'},
    \end{equation}
    and 
    \begin{equation}
        \label{eq:submod-2}
            \sum_{g \in A_j} \mgaini{A_i}{g} \geq v_i(A_i \cup A_j) - v_i(A_i) > 0,
    \end{equation}
    where the last inequality follows from \eqref{eq:WWMEF1-2}.
    By \eqref{eq:submod-2}, we have that $\mgaini{A_i}{g} > 0$ for at least one $g\in A_j$. 

    Let $B = \{g \in A_j\colon \mgaini{A_i}{g} > 0\}$, and let
    $g^* \in \argmin_{g \in B}\frac{\mloss{j}{A_j}{g}}{\mgaini{A_i}{g}}$. 
    Due to our choice of~$g^*$ and the definition of~$B$, we have
    \begin{equation} \label{thm:mwnw_wwmef1-deltaratio}
        \frac{\mloss{j}{A_j}{g^*}}{\mgaini{A_i}{g^*}} 
        \leq \frac{\sum_{g \in B} \mloss{j}{A_j}{g}}{\sum_{g \in B} \mgaini{A_i}{g}} 
        \leq \frac{\sum_{g \in A_j} \mloss{j}{A_j}{g}}{\sum_{g \in A_j} \mgaini{A_i}{g}}.
    \end{equation}
    Note that $v_i(A_i \cup A_j \setminus \{g^*\}) - v_i(A_i) + \mgaini{A_i}{g^*} \ge \mgaini{A_i}{g^*} > 0$.
    Combining (\ref{thm:mwnw_wwmef1-deltaratio}) with \eqref{eq:remove-one}, \eqref{eq:submod-1}, and \eqref{eq:submod-2}, we get
    \begin{equation} \label{thm:mwnw_wwmef1-wnwratio_2} 
        \frac{\mloss{j}{A_j}{g^*}}{\mgaini{A_i}{g^*}} \leq \frac{v_j(A_j)}{v_i(A_i \cup A_j \setminus \{g^*\}) - v_i(A_i) + \mgaini{A_i}{g^*}}
    \end{equation}
    and
    \begin{equation} \label{thm:mwnw_wwmef1-wnwratio_3}
        \frac{\mloss{j}{A_j}{g^*}}{\mgaini{A_i}{g^*}} \leq \frac{v_j(A_j)}{v_i(A_i \cup A_j) - v_i(A_i)}.
    \end{equation}
    
    We split our remaining argument into two cases.
    In each case, we will show that the expression in~(\ref{thm:mwnw_wwmef1-wnwratio}) is greater than $1$, which results in a contradiction because $\mathcal{A}$ is an MWNW allocation.
    
    \paragraph{Case 1: $w_i \geq w_j$.}
    Assume first that $\mloss{j}{A_j}{g^*} > 0$.
    By \eqref{eq:WWMEF1-1} and (\ref{thm:mwnw_wwmef1-wnwratio_2}), we have
    \begin{align*}
            \frac{v_i(A_i)}{w_i} & < \frac{v_i(A_i \cup A_j \setminus \{g^*\})-v_i(A_i)}{w_j} 
            \leq \frac{\frac{\mgaini{A_i}{g^*}}{\mloss{j}{A_j}{g^*}} \cdot v_j(A_j) - \mgaini{A_i}{g^*}}{w_j}.
    \end{align*} 
    Multiplying $\mloss{j}{A_j}{g^*} \cdot w_i$ on both sides and manipulating, we get
    \begin{equation*}
        \frac{w_i}{w_j} \cdot \mgaini{A_i}{g^*} \cdot v_j(A_j) - \frac{w_i}{w_j}\cdot \mgaini{A_i}{g^*} \cdot \mloss{j}{A_j}{g^*} - \mloss{j}{A_j}{g^*} \cdot v_i(A_i) > 0.
    \end{equation*}
    Since $N^+_\mathcal{A} = N$, we can divide by $v_i(A_i)v_j(A_j)$ on both sides to obtain
    \begin{equation*}
        \frac{w_i}{w_j} \left( \frac{\mgaini{A_i}{g^*}}{v_i(A_i)} - \frac{\mgaini{A_i}{g^*} \mloss{j}{A_j}{g^*}}{v_i(A_i)v_j(A_j)} \right) - \frac{\mloss{j}{A_j}{g^*}}{v_j(A_j)} > 0.
    \end{equation*}
    Adding $1$ to both sides yields
    \begin{equation*}
        1+ \frac{w_i}{w_j} \left( \frac{\mgaini{A_i}{g^*}}{v_i(A_i)} - \frac{\mgaini{A_i}{g^*} \mloss{j}{A_j}{g^*}}{v_i(A_i)v_j(A_j)} \right) - \frac{\mloss{j}{A_j}{g^*}}{v_j(A_j)} > 1,
    \end{equation*}
    which can then be factorized as
    \begin{equation}
    \label{eq:case1-factor}
        \left( 1 + \frac{w_i}{w_j}\cdot \frac{\mgaini{A_i}{g^*}}{v_i(A_i)} \right) \biggl( 1 - \frac{\mloss{j}{A_j}{g^*}}{v_j(A_j)} \biggr) > 1.
    \end{equation}
    If $\mloss{j}{A_j}{g^*} = 0$, then \eqref{eq:case1-factor} holds trivially because $\mgaini{A_i}{g^*} > 0$.
    Hence, \eqref{eq:case1-factor} always holds.
    
    Now, since $\frac{\mgaini{A_i}{g^*}}{v_i(A_i)} > 0$ and $\frac{w_i}{w_j} \geq 1$, Bernoulli's inequality together with \eqref{eq:case1-factor} implies that
    \begin{equation*}
        \left( 1 + \frac{\mgaini{A_i}{g^*}}{v_i(A_i)} \right)^{\frac{w_i}{w_j}} \biggl( 1 - \frac{\mloss{j}{A_j}{g^*}}{v_j(A_j)} \biggr) 
        \geq \left( 1 + \frac{w_i}{w_j}\cdot \frac{\mgaini{A_i}{g^*}}{v_i(A_i)} \right) \biggl( 1 - \frac{\mloss{j}{A_j}{g^*}}{v_j(A_j)} \biggr) > 1,
    \end{equation*}
    which means that
    \begin{equation*}
        \left( \frac{(v_i(A_i) + \mgaini{A_i}{g^*})^{w_i} (v_j(A_j)-\mloss{j}{A_j}{g^*})^{w_j}}{v_i(A_i)^{w_i} v_j(A_j)^{w_j}} \right)^{\frac{1}{w_j}} 
        > 1,
    \end{equation*}
    that is (from (\ref{thm:mwnw_wwmef1-wnwratio})),
    \begin{equation*}
        \left[ \frac{\text{WNW}(\mathcal{A}')}{\text{WNW}(\mathcal{A})} \right]^{\frac{1}{w_j}} > 1,
    \end{equation*}
    a contradiction.

    \paragraph{Case 2: $w_i < w_j$.}
    Assume first that $\mloss{j}{A_j}{g^*} > 0$.
    By \eqref{eq:WWMEF1-2} and (\ref{thm:mwnw_wwmef1-wnwratio_3}), we have
    \begin{align*}
    \frac{w_j}{w_i} \cdot (v_i(A_i) + \mgaini{A_i}{g^*}) &= \frac{w_j}{w_i} \cdot v_i(A_i \cup \{g^*\}) \\
    &< v_i(A_i \cup A_j )-v_i(A_i) 
    \leq \frac{\mgaini{A_i}{g^*}}{\mloss{j}{A_j}{g^*}}\cdot v_j(A_j).
    \end{align*}
    Multiplying $\mloss{j}{A_j}{g^*}$ on both sides and manipulating, we get
    \begin{equation*}
        \mgaini{A_i}{g^*} \cdot v_j(A_j) - \frac{w_j}{w_i} \cdot \mloss{j}{A_j}{g^*} \cdot v_i(A_i) - \frac{w_j}{w_i} \cdot \mgaini{A_i}{g^*} \cdot \mloss{j}{A_j}{g^*}> 0.
    \end{equation*}
    Since $N^+_\mathcal{A} = N$, we can divide by $v_i(A_i)v_j(A_j)$ on both sides to obtain
    \begin{equation*}
        \frac{\mgaini{A_i}{g^*}}{v_i(A_i)} - \frac{w_j}{w_i} \left( \frac{\mloss{j}{A_j}{g^*}}{v_j(A_j)} + \frac{\mgaini{A_i}{g^*} \mloss{j}{A_j}{g^*}}{v_i(A_i)v_j(A_j)} \right) > 0.
    \end{equation*}
    Adding $1$ to both sides yields
    \begin{equation*}
        1+ \frac{\mgaini{A_i}{g^*}}{v_i(A_i)} - \frac{w_j}{w_i} \left( \frac{\mloss{j}{A_j}{g^*}}{v_j(A_j)} + \frac{\mgaini{A_i}{g^*} \mloss{j}{A_j}{g^*}}{v_i(A_i)v_j(A_j)} \right) > 1,
    \end{equation*}
    which can then be factorized as
    \begin{equation}
    \label{eq:case2-factor}
        \left( 1 + \frac{\mgaini{A_i}{g^*}}{v_i(A_i)} \right) \biggl( 1 - \frac{w_j}{w_i}\cdot \frac{\mloss{j}{A_j}{g^*}}{v_j(A_j)} \biggr) > 1.
    \end{equation}
    If $\mloss{j}{A_j}{g^*} = 0$, then \eqref{eq:case2-factor} holds trivially because $\mgaini{A_i}{g^*} > 0$.
    Hence, \eqref{eq:case2-factor} always holds.
    Moreover, since $w_j > w_i$, it must be that $\mloss{j}{A_j}{g^*} < v_j(A_j)$.
    
    Now, since $\frac{\mloss{j}{A_j}{g^*}}{v_j(A_j)} < 1$ and $\frac{w_j}{w_i} > 1$, Bernoulli's inequality together with \eqref{eq:case2-factor} implies that
    \begin{equation*}
        \left( 1 + \frac{\mgaini{A_i}{g^*}}{v_i(A_i)} \right) \biggl( 1 - \frac{\mloss{j}{A_j}{g^*}}{v_j(A_j)} \biggr)^{\frac{w_j}{w_i}} \geq \left( 1 + \frac{\mgaini{A_i}{g^*}}{v_i(A_i)} \right) \biggl( 1 - \frac{w_j}{w_i}\cdot \frac{\mloss{j}{A_j}{g^*}}{v_j(A_j)} \biggr) > 1,
    \end{equation*}
    which means that
    \begin{equation*}
        \left( \frac{(v_i(A_i) + \mgaini{A_i}{g^*})^{w_i} (v_j(A_j)-\mloss{j}{A_j}{g^*})^{w_j}}{v_i(A_i)^{w_i} v_j(A_j)^{w_j}} \right)^{\frac{1}{w_i}} 
        > 1,
    \end{equation*}
    that is (from (\ref{thm:mwnw_wwmef1-wnwratio})),
    \begin{equation*}
        \left[ \frac{\text{WNW}(\mathcal{A}')}{\text{WNW}(\mathcal{A})} \right]^{\frac{1}{w_i}} > 1,
    \end{equation*}
    a contradiction.
    This completes the proof for the scenario where $N^+_\mathcal{A} = N$. 
    
    Finally, we handle the scenario where $N^+_\mathcal{A} \subsetneq N$. 
    Let $i,j\in N$ with $A_j\neq\emptyset$, and consider three cases.
    \begin{itemize}
    \item If $i,j\in N^+_\mathcal{A}$, we can show as in the scenario where $N^+_\mathcal{A} = N$ that the WWMEF1 condition between $i$ and $j$ is satisfied.
    \item Suppose that $i\not\in N^+_\mathcal{A}$ and $j\in N^+_\mathcal{A}$.
    This means that $v_i(A_i) = 0$ and $v_j(A_j) > 0$.
    If there exists a good $g\in A_j$ such that $v_i(A_i\cup A_j\setminus\{g\}) = 0$, the WWMEF1 condition from $i$ to $j$ is trivially satisfied.
    Thus, assume now  that $v_i(A_i\cup A_j\setminus\{g\}) > 0$ for every $g\in A_j$.
    Fix an arbitrary $\widehat{g}\in A_j$.
    Since $v_i(A_i\cup A_j\setminus\{\widehat{g}\}) > 0$, by submodularity, there exists $g'\in A_j$ such that $v_i(A_i\cup \{g'\}) > 0$.
    Similarly, since $v_i(A_i\cup A_j\setminus\{g'\}) > 0$, there exists $g''\in A_j$ with $g''\neq g'$ (possibly $g'' = \widehat{g}$) such that $v_i(A_i\cup\{g''\}) > 0$.
    
    If $v_j(A_j\setminus\{g'\}) > 0$, we can transfer $g'$ from $A_j$ to $A_i$ and obtain an allocation with more agents receiving positive utility than in $\mathcal{A}$, a contradiction.
    Therefore, $v_j(A_j\setminus\{g'\}) = 0$.
    Similarly, $v_j(A_j\setminus\{g''\}) = 0$.
    By submodularity, we must have $v_j(A_j) = 0$ as well, contradicting the assumption that $j\in N^+_\mathcal{A}$.
    
    \item Suppose that $j\not\in N^+_\mathcal{A}$.
    This means that $v_j(A_j) = 0$.
    If $v_i(A_i\cup\{g\}) > v_i(A_i)$ for some $g\in A_j$, we can transfer $g$ from $A_j$ to $A_i$ and obtain an allocation with the same number of agents receiving positive utility as in $\mathcal{A}$ but a higher weighted Nash welfare of these agents than in $\mathcal{A}$, a contradiction.
    Hence, $v_i(A_i\cup\{g\}) = v_i(A_i)$ for every $g\in A_j$.
    Submodularity implies that $v_i(A_i\cup A_j) = v_i(A_i)$.
    Therefore, the WWMEF1 condition from $i$ to $j$ is satisfied.
    \end{itemize}
    It follows that $\mathcal{A}$ is WWMEF1 in all cases.
\end{proof}

\citet{ViswanathanZi22} showed that if agents have matroid-rank valuations, then an MWNW allocation can be found in polynomial time.
On the other hand, with equal-weight agents and additive valuations, even approximating the maximum Nash welfare is computationally difficult \citep{Lee17}.

\section{Transfer Algorithm}
\label{sec:transfer}

For agents with equal weights and matroid-rank valuations, \citet[Algorithm~1]{BenabbouChIg21} proposed a ``transfer algorithm'' that computes a clean, utilitarian welfare-maximizing EF1 allocation  in polynomial time.
In this section, we extend their algorithm to the weighted setting.
Our algorithm is presented as \Cref{alg:TWEF1-transfer}; we argue in the proof of \Cref{thm:transfer-algo} that the algorithm is well-defined.
\begin{algorithm}[!ht]
    \caption{For finding a clean TWEF$(x, 1-x)$ allocation maximizing $\sum_{i \in N} v_i(A_i)$}\label{alg:TWEF1-transfer}
    \begin{algorithmic}
        \State Compute a clean allocation $\mathcal{A}$ that maximizes the unweighted utilitarian welfare.
        \While{there exist $i, j \in N$ such that TWEF$(x, 1-x)$ from $i$ to $j$ fails with respect to $\mathcal{A}$}
            \State Find a good $g \in A_j$ with $\mgaini{A_i}{g} = 1$.
            \State $A_i \gets A_i\cup\{g\}$; $A_j \gets A_j\setminus\{g\}$.
        \EndWhile
    \end{algorithmic}
\end{algorithm}

\begin{theorem}
\label{thm:transfer-algo}
Suppose that all agents have matroid-rank valuations, and let $x\in [0,1]$.
\Cref{alg:TWEF1-transfer} with parameter~$x$ returns a clean TWEF$(x,1-x)$ (and therefore WMEF$(x,1-x)$) allocation that maximizes the unweighted utilitarian welfare among all allocations in polynomial time.
\end{theorem}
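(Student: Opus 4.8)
The plan is to verify, in order: (i) the algorithm is well-defined; (ii) each iteration preserves both cleanliness and the unweighted utilitarian welfare; (iii) upon termination the output is TWEF$(x,1-x)$, hence WMEF$(x,1-x)$ by \Cref{prop:TWEF-WMEF}; and (iv) the loop halts after polynomially many iterations. Throughout I would use that, for matroid-rank valuations, a clean allocation satisfies $v_i(A_i)=|A_i|$ for all $i$ \citep[Prop.~3.3]{BenabbouChIg21}, and that the initial clean welfare-maximizing allocation is computable in polynomial time by matroid-union techniques (then greedily discarding redundant goods).

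Well-definedness and the invariants are the routine part. If TWEF$(x,1-x)$ from $i$ to $j$ fails, then in particular $v_i(A_i)<v_i(A_i\cup A_j)$, so by submodularity and binary marginals there is a good $g\in A_j$ with $\mgaini{A_i}{g}=1$; this is exactly the good the algorithm transfers, so the step is well-defined. Since $\mathcal A$ is clean and $g\in A_j$, we have $\mloss{j}{A_j}{g}=1$, so moving $g$ increases $v_i$ by exactly $1$ and decreases $v_j$ by exactly $1$: the unweighted welfare is unchanged, and both new bundles remain clean because their values still equal their sizes. As the initial allocation is a clean welfare maximizer, this invariant is maintained, giving (ii); and (iii) is immediate from the loop guard together with \Cref{prop:TWEF-WMEF}.

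The crux is (iv), and the first ingredient is a key inequality: whenever $g$ is transferred from $j$ to $i$, writing $a=|A_i|$ and $b=|A_j|$ just before the transfer, I claim
\[
\frac{a+1-x}{w_i}\;<\;\frac{b-x}{w_j}.
\]
This follows from the failure of TWEF$(x,1-x)$ applied to the chosen $g$: using $v_i(A_i)=a$ and $\mgaini{A_i}{g}=1$, failure gives $\tfrac{a+1-x}{w_i}<\tfrac{v_i(A_j)-x\cdot\mlossi{A_j}{g}}{w_j}$, and a short case split on $\mlossi{A_j}{g}\in\{0,1\}$ bounds the right-hand numerator by $b-x$ (when $\mlossi{A_j}{g}=1$ use $v_i(A_j)\le b$; when $\mlossi{A_j}{g}=0$ use $v_i(A_j)=v_i(A_j\setminus\{g\})\le b-1\le b-x$). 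The natural real-valued potential $\sum_i\tfrac{1}{w_i}\big(\tfrac{|A_i|(|A_i|+1)}{2}-x|A_i|\big)$ then strictly decreases by precisely $\tfrac{a+1-x}{w_i}-\tfrac{b-x}{w_j}<0$ per transfer, reproving termination in the spirit of Benabbou et al. The obstacle, exactly as flagged in the introduction, is that with heterogeneous weights this decrement is real-valued and can be arbitrarily small, so it yields only finiteness, not a polynomial bound.

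To obtain a polynomial bound I would replace this by an \emph{integer}-valued potential built from ranks. Collect the at most $nm$ distinct threshold values $\{(k-x)/w_i : i\in[n],\, k\in[m]\}$, and let $\mathrm{rk}(\cdot)$ denote the rank of a value within this sorted set. Define
\[
\Psi(\mathcal A)\;=\;\sum_{i\in N}\sum_{k=1}^{|A_i|}\mathrm{rk}\!\left(\frac{k-x}{w_i}\right).
\]
A transfer from $j$ to $i$ deletes the summand $\mathrm{rk}\big((b-x)/w_j\big)$ and inserts $\mathrm{rk}\big((a+1-x)/w_i\big)$; the key inequality makes the inserted value strictly smaller, hence of strictly smaller rank, so $\Psi$ drops by at least $1$ at every iteration. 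Since $0\le\Psi\le\big(\sum_i|A_i|\big)\cdot nm\le nm^2$, the loop runs for at most $nm^2$ iterations. Each iteration is polynomial in the value-query model (scanning all pairs and locating a suitable $g$), which together with the polynomial-time computation of the initial allocation completes the polynomial-time claim. The main difficulty is thus concentrated in isolating the inequality $\tfrac{a+1-x}{w_i}<\tfrac{b-x}{w_j}$ and recognizing that passing from the metric potential to the rank potential is what converts termination into a polynomial guarantee.
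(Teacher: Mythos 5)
Your proposal is correct and is essentially the paper's own proof: your key inequality $\frac{|A_i|+1-x}{w_i}<\frac{|A_j|-x}{w_j}$ is exactly the paper's \eqref{eq:transfer-size}, and your rank-based integer potential over the at most $nm$ threshold values $(k-x)/w_i$ is precisely the paper's ``cupboards and balls'' argument (shelves at heights $\frac{k-x}{w_i}$, each transfer moving a ball to a strictly lower shelf, hence at most $m\cdot mn$ transfers). The only cosmetic difference is that you bound $v_i(A_j)$ by $|A_j|$ directly via binary marginals in a case split on $\mlossi{A_j}{g}\in\{0,1\}$, where the paper routes the same estimate through cleanness of $A_j$; the substance is unchanged.
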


Since any allocation maximizing the unweighted utilitarian welfare is PO, the allocation output by \Cref{alg:TWEF1-transfer} is also PO.
In the unweighted setting, \citet{BenabbouChIg21} exhibited polynomial-time termination of their algorithm using the potential function $\Phi(\mathcal{A}) := \sum_{i\in N}v_i(A_i)^2$.
As $\Phi(\mathcal{A})$ is always an integer between $0$ and $m^2$ and decreases with every transfer, the number of transfers made by their algorithm is at most $m^2$.
While we can also establish termination of our weighted algorithm by modifying the potential function to $\Phi(\mathcal{A}) = \sum_{i\in N}\frac{v_i(A_i)^2 + (1-2x)\cdot v_i(A_i)}{w_i}$, this argument does not yield a polynomial upper bound on the number of transfers, because the potential function may decrease by a very small amount depending on the weights.
Therefore, we will instead employ a different, more refined, argument to show that our algorithm terminates in polynomial time as well. 

\begin{proof}[Proof of \Cref{thm:transfer-algo}]
By Proposition~\ref{prop:TWEF-WMEF}, it suffices to prove the statement for TWEF$(x,1-x)$.

First, we claim that each transfer maintains the welfare optimality and cleanness of the allocation.
Indeed, $v_j(A_j)$ decreases by $1$ because the previous allocation is clean, while $v_i(A_i)$ increases by $1$ due to the algorithm's choice of the good $g\in A_j$.
Hence, $\sum_{i\in N}v_i(A_i)$ remains the same.
Moreover, since $v_i(A_i) = |A_i|$ for all $i\in N$, the allocation remains clean.

If the TWEF$(x,1-x)$ condition from agent~$i$ to agent~$j$ fails at some point during the execution of the algorithm, it must be that $v_i(A_i) < v_i(A_i\cup A_j)$, and for every $g\in A_j$ we have
\begin{align}
\frac{v_i(A_i) + (1-x)\cdot \mgaini{A_i}{g}}{w_i} 
&< \frac{v_i(A_j) - x\cdot \mlossi{A_j}{g}}{w_j} \nonumber \\
&= \frac{v_i(A_j\setminus\{g\}) + (1-x)\cdot \mlossi{A_j}{g}}{w_j} \nonumber \\
&\le \frac{v_j(A_j\setminus\{g\}) + (1-x)\cdot \mloss{j}{A_j}{g}}{w_j}, \label{eq:transfer-TWEF}
\end{align}
where the latter inequality follows from cleanness.
Since $v_i(A_i) < v_i(A_i\cup A_j)$, by submodularity, there exists $g^*\in A_j$ such that $\mgaini{A_i}{g^*} = 1$; in particular, the algorithm is well-defined.
Plugging this good~$g^*$ into \eqref{eq:transfer-TWEF} and using the cleanness of $\mathcal{A}$, we get
\begin{align}
\label{eq:transfer-size}
\frac{|A_i| + (1-x)}{w_i} < \frac{|A_j| - x}{w_j}.
\end{align}

If the algorithm terminates, then TWEF$(x,1-x)$ is satisfied for all pairs of agents $i,j$.
We will show that the algorithm always terminates, and moreover does so in polynomial time.
The initial clean allocation $\mathcal{A}$ can be found in polynomial time \citep{BenabbouChIg21}.
Checking whether TWEF$(x,1-x)$ fails for some pair $i,j$ (and, if so, finding a valid transfer) can be done in polynomial time.
It therefore remains to argue that the number of transfers is polynomial.
For ease of understanding, we will formulate this argument in terms of cupboards and balls.

Associate each $i\in N$ with a cupboard consisting of $m$ shelves at height $\frac{1-x}{w_i}, \frac{2-x}{w_i}, \dots, \frac{m-x}{w_i}$, respectively.
For the clean allocation~$\mathcal{A}$ at the beginning of the algorithm and each $i\in N$, place one ball on each of the $|A_i|$ lowest shelves\footnote{Note that the sum of the heights of all balls is $\sum_{i\in N}\frac{|A_i|^2 + (1-2x)\cdot |A_i|}{2w_i}$, which is exactly half of the potential function mentioned before the proof.} of cupboard~$i$.
Whenever a good is transferred from $A_j$ to $A_i$, move the highest ball in cupboard~$j$ to the lowest shelf without a ball in cupboard~$i$.
This means that the ball is moved from height $\frac{|A_j|-x}{w_j}$ to height $\frac{|A_i|+1-x}{w_i}$; by \eqref{eq:transfer-size}, the height of the ball decreases.
Since there are $m$ balls and at most $mn$ heights of the shelves, the number of transfers is at most $m^2n$, which is indeed polynomial.\footnote{Note that if all agents have equal weights, the number of different shelf heights is only $m$. 
The number of transfers is then bounded by $m^2$, which matches the bound provided by \citet{BenabbouChIg21}.}
This concludes the proof.
\end{proof}

\section{Harmonic Welfare}
\label{sec:harmonic}

Recall from \Cref{sec:Nash} that an MWNW allocation maximizes the product $\prod_{i\in N} v_i(A_i)^{w_i}$, or equivalently, the sum $\sum_{i\in N} w_i\cdot \ln v_i(A_i)$.
Since $\ln k$ is approximately the $k$-th harmonic number $H_k := 1 + \frac{1}{2} + \dots + \frac{1}{k}$ for each positive integer $k$, one could also consider a \emph{maximum weighted harmonic welfare (MWHW)} allocation, defined as an allocation maximizing the sum $\sum_{i\in N} w_i\cdot H_{v_i(A_i)}$, where $H_0 = 0$.
Interestingly, we show in this section that for agents with matroid-rank valuations, MWHW outperforms MWNW in terms of fairness.
Specifically, even though for each $x\in [0,1]$ there exists an instance with binary additive valuations and identical goods in which every MWNW allocation fails WEF$(x,1-x)$ \citep{ChakrabortySeSu22}, we show that a clean MWHW allocation satisfies TWEF$(0,1)$ for matroid-rank valuations (and therefore WEF$(0,1)$ for binary additive valuations).
More generally, we define a class of modified harmonic numbers parameterized by $x$ such that a clean maximum-welfare allocation based on each~$x$ satisfies TWEF$(x,1-x)$.

\begin{definition}[Modified harmonic numbers]
    Let $k \in \mathbb{Z}_{\geq 0}$.
    For $x\in [0,1)$, the number $H_{k,x}$ is defined by
    \begin{equation*}
        H_{k,x} =
        \begin{cases} 
          \frac{1}{1-x} + \frac{1}{2-x} + \dots + \frac{1}{k - x} & \text{ if } k \geq 1;\\
          0 & \text{ if } k = 0, 
       \end{cases}
    \end{equation*}
    whereas for $x = 1$, $H_{k,x}$ is defined by
    \begin{equation*}
        H_{k,1} =
        \begin{cases} 
          1 + \frac{1}{2} + \dots + \frac{1}{k-1} & \text{ if } k \geq 2;\\
          0 & \text{ if } k = 1;\\
          -\infty & \text{ if } k = 0.
       \end{cases}
    \end{equation*}
\end{definition}
Note that the numbers $H_{k,0}$ correspond to the canonical harmonic numbers $H_k$, and that for each~$x$, the sequence $H_{0,x}, H_{1,x}, \dots$ is increasing.
We define a maximum weighted harmonic welfare allocation parameterized by $x$.
Recall that $N^+_\mathcal{A}$ denotes the set of agents who receive positive utility from an allocation~$\mathcal{A}$.
\begin{definition}[\mwhw{}]
    For $x \in [0,1)$, given an instance with matroid-rank valuations, an allocation~$\mathcal{A}$ is an \mwhw{} allocation if it maximizes the sum $\text{WHW}_x(\mathcal{A}) := \sum_{i\in N} w_i\cdot H_{v_i(A_i), x}$.
    
    For $x = 1$, $\mathcal{A}$ is an MWHW$_1$ allocation if it maximizes the number of agents receiving positive utility and, subject to that, maximizes the sum $\sum_{i\in N^+_\mathcal{A}} w_i\cdot H_{v_i(A_i),1}$.
\end{definition}

The quantity $H_{v_i(A_i), x}$ is well-defined because, for matroid-rank valuations, $v_i(A_i)$ is always a non-negative integer.
We now prove the efficiency and fairness guarantees of \mwhw{} allocations, starting with efficiency.

\begin{theorem}
\label{thm:harmonic-PO}
Let $x\in [0,1]$.
Under matroid-rank valuations, any \mwhw{} allocation is PO. 
\end{theorem}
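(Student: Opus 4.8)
The plan is to argue by contradiction: suppose some \mwhw{} allocation $\mathcal{A}$ is not PO, so it is Pareto-dominated by another allocation $\widehat{\mathcal{A}}$ with $v_i(\widehat{A}_i) \ge v_i(A_i)$ for all $i$ and strict inequality for some agent. Since valuations are matroid-rank, each $v_i(A_i)$ is a non-negative integer, and the domination means each coordinate of the utility profile weakly increases while at least one strictly increases by at least $1$. I would handle the cases $x \in [0,1)$ and $x = 1$ separately, since the definition of \mwhw{} differs between them.

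For $x \in [0,1)$, the welfare is $\mathrm{WHW}_x(\mathcal{A}) = \sum_{i\in N} w_i\cdot H_{v_i(A_i),x}$. The key observation is that for each fixed $x$, the sequence $H_{0,x}, H_{1,x}, \dots$ is strictly increasing (each increment is $\frac{1}{k-x} > 0$), as the excerpt already notes. Hence if $v_i(\widehat{A}_i) \ge v_i(A_i)$ for every $i$ with at least one strict inequality, then $H_{v_i(\widehat{A}_i),x} \ge H_{v_i(A_i),x}$ for every $i$ and strictly for the dominating coordinate; multiplying by the positive weights $w_i$ and summing gives $\mathrm{WHW}_x(\widehat{\mathcal{A}}) > \mathrm{WHW}_x(\mathcal{A})$, contradicting the maximality of $\mathcal{A}$.

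For $x = 1$, I would mirror the PO argument already given in the proof of the WWMEF1 theorem in Section~\ref{sec:Nash}. Let $j$ be the agent whose utility strictly increases under $\widehat{\mathcal{A}}$. If $j \notin N^+_\mathcal{A}$, then under $\widehat{\mathcal{A}}$ every agent in $N^+_\mathcal{A} \cup \{j\}$ has positive utility, contradicting that $\mathcal{A}$ maximizes the number of positively-valued agents. If $j \in N^+_\mathcal{A}$, then all agents in $N^+_\mathcal{A}$ retain positive utility under $\widehat{\mathcal{A}}$ (values only increase), so the number of positively-valued agents does not decrease; using that $H_{\cdot,1}$ is strictly increasing on the positive integers, $\sum_{i\in N^+_\mathcal{A}} w_i\cdot H_{v_i(\widehat{A}_i),1} > \sum_{i\in N^+_\mathcal{A}} w_i\cdot H_{v_i(A_i),1}$, contradicting the secondary maximization criterion of \mwhw{} at $x=1$.

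I do not expect a serious obstacle here; the statement is essentially the standard observation that maximizing a separable sum of strictly increasing functions of the utilities yields Pareto-optimality, and matroid-rank valuations guarantee the utilities are integers so that $H_{\cdot,x}$ is evaluated only at legitimate arguments. The one point requiring a little care is the $x=1$ case, where $H_{0,1} = -\infty$ forces the lexicographic treatment of $N^+_\mathcal{A}$; handling that via the same two-case split as in Section~\ref{sec:Nash} is the mildly delicate part, but it is routine once the analogy is drawn.
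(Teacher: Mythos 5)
Your proof is correct and takes essentially the same approach as the paper: for $x<1$ the strict monotonicity of $H_{k,x}$ immediately contradicts maximality of $\mathrm{WHW}_x$, and for $x=1$ the paper performs exactly your two-case split on whether the strictly improving agent $j$ lies in $N^+_\mathcal{A}$, contradicting first the count of positively-valued agents and then the secondary sum criterion, mirroring the PO argument for MWNW in Section~\ref{sec:Nash} just as you suggest. No gaps.
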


\begin{proof}
Let $\mathcal{A}$ be an \mwhw{} allocation.
    For $x < 1$, if $\mathcal{A}$ is Pareto-dominated by another allocation~$\mathcal{A}'$, then $\sum_{i\in N} w_i\cdot H_{v_i(A'_i), x} > \sum_{i\in N} w_i\cdot H_{v_i(A_i), x}$, a contradiction.
    
    Consider the case $x = 1$. 
    If $\mathcal{A}$ were not PO, there would exist an allocation $\widehat{\mathcal{A}}$ such that $v_j(\widehat{A}_j) > v_j(A_j)$ for some $j \in N$ and $v_i(\widehat{A}_i) \geq v_i(A_i)$ for every $i \in N \setminus \{j\}$.
    If $j \in N \setminus N^+_\mathcal{A}$, we would have $v_i(\widehat{A}_i) > 0$ for every $i \in N^+_\mathcal{A} \cup \{j\}$, contradicting the assumption that $N^+_\mathcal{A}$ is the largest subset of agents to whom it is possible to give positive utility simultaneously. 
    On the other hand, if $j \in N^+_\mathcal{A}$, we would have $\sum_{i\in N^+_\mathcal{A}} w_i\cdot H_{v_i(\widehat{A}_i),x} > \sum_{i\in N^+_\mathcal{A}} w_i\cdot H_{v_i(A_i),x}$, again a contradiction.
    Therefore, $\mathcal{A}$ is PO.
\end{proof}

For the fairness guarantee, we will make an assumption that the allocation is clean; we later demonstrate that this assumption is necessary.
We also remark that given any \mwhw{} allocation, one can easily obtain another \mwhw{} allocation in which every agent receives the same utility as before by iteratively removing any good that does not contribute to its owner's utility until no such good exists.

\begin{theorem}
\label{thm:harmonic-TWEF}
Let $x\in [0,1]$.
Under matroid-rank valuations, any clean \mwhw{} allocation satisfies TWEF$(x,1-x)$ (and therefore WMEF$(x,1-x)$). 
\end{theorem}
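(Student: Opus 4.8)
The plan is to fix a clean \mwhw{} allocation $\mathcal{A}$ (so that $v_i(A_i)=|A_i|$ for every $i$, by the cleanness characterization for matroid-rank valuations) together with an ordered pair $i,j$, and to verify the TWEF$(x,1-x)$ condition from $i$ to $j$ directly. If $v_i(A_i\cup A_j)=v_i(A_i)$, the first disjunct of the definition holds and we are done, so I would assume $v_i(A_i\cup A_j)>v_i(A_i)$; by submodularity together with the $\{0,1\}$ marginal property this guarantees some $g^*\in A_j$ with $\mgaini{A_i}{g^*}=1$. The goal is then to exhibit a single good $g\in A_j$ witnessing the TWEF inequality.

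The first key step extracts an inequality on bundle sizes from the optimality of $\mathcal{A}$. The crucial arithmetic fact is that the modified harmonic numbers satisfy $H_{k,x}-H_{k-1,x}=\frac{1}{k-x}$ for all $k\ge 1$. Transferring a good $g$ with $\mgaini{A_i}{g}=1$ from $A_j$ to $A_i$ produces an allocation in which $i$'s utility rises from $|A_i|$ to $|A_i|+1$ and, because $\mathcal{A}$ is clean, $j$'s utility falls from $|A_j|$ to $|A_j|-1$. The resulting change in $\mathrm{WHW}_x$ equals $\frac{w_i}{|A_i|+1-x}-\frac{w_j}{|A_j|-x}$, which must be nonpositive since $\mathcal{A}$ maximizes $\mathrm{WHW}_x$. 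This yields the central inequality $\frac{|A_j|-x}{w_j}\le \frac{|A_i|+1-x}{w_i}$.

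The second key step chooses $g$ so that $v_i(A_j)-x\cdot\mlossi{A_j}{g}\le |A_j|-x$, and I would split into two cases. If some good $g$ with $\mgaini{A_i}{g}=1$ also satisfies $\mlossi{A_j}{g}=1$, then $v_i(A_j)-x\le |A_j|-x$ because $v_i(A_j)\le |A_j|$. Otherwise every marginally-gaining good in $A_j$ has $\mlossi{A_j}{g}=0$; then $A_j$ cannot be independent in agent $i$'s matroid (else every element of $A_j$, including $g^*$, would be necessary), so $v_i(A_j)\le |A_j|-1$, and any such $g$ works since $x\le 1$. For this $g$, using $v_i(A_i)=|A_i|$ and $\mgaini{A_i}{g}=1$, chaining gives $\frac{v_i(A_j)-x\cdot\mlossi{A_j}{g}}{w_j}\le\frac{|A_j|-x}{w_j}\le\frac{|A_i|+1-x}{w_i}=\frac{v_i(A_i)+(1-x)\cdot\mgaini{A_i}{g}}{w_i}$, which is exactly TWEF$(x,1-x)$ from $i$ to $j$; WMEF$(x,1-x)$ then follows from Proposition~\ref{prop:TWEF-WMEF}.

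The main obstacle is the boundary case $x=1$. Because $H_{0,1}=-\infty$ and MWHW$_1$ is defined lexicographically (first maximizing $|N^+_\mathcal{A}|$, then the harmonic sum over $N^+_\mathcal{A}$), the transfer used in the first step can push $j$ below positive utility (when $|A_j|=1$) or bring $i$ into $N^+_\mathcal{A}$ (when $|A_i|=0$), so the clean ``change in $\mathrm{WHW}_x\le 0$'' computation no longer applies verbatim. I would treat this separately: when $|A_i|=0$ and $|A_j|\ge 2$, maximality of $|N^+_\mathcal{A}|$ is contradicted, so that configuration cannot arise; in the remaining boundary configurations (essentially $|A_j|=1$), the good-selection bound already forces the right-hand side of the TWEF inequality down to at most $0$, which is dominated by the nonnegative left-hand side. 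Checking that these edge cases are exhaustive and consistent with the lexicographic optimality of MWHW$_1$ is the most delicate part of the argument, whereas for $x\in[0,1)$ the summed welfare is finite and the optimality inequality applies without qualification.
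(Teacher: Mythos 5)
Your proposal is correct and takes essentially the same route as the paper's proof, which differs only in being phrased as a contradiction: both rest on the single-good transfer of a unit-marginal-gain good combined with the telescoping property $H_{k,x}-H_{k-1,x}=\tfrac{1}{k-x}$ to obtain the size inequality $\tfrac{|A_j|-x}{w_j}\le\tfrac{|A_i|+1-x}{w_i}$, use cleanness to pass between utilities and cardinalities, and treat $x=1$ separately via the lexicographic definition (your ``$|A_i|=0$, $|A_j|\ge 2$ is impossible'' step is exactly the paper's argument that $v_i(A_i)\ge 1$). Your case split on whether the witness good has marginal loss $0$ or $1$ is slightly more roundabout than needed---the bound $v_i(A_j)-x\cdot\Delta^-_i(A_j,g)=v_i(A_j\setminus\{g\})+(1-x)\cdot\Delta^-_i(A_j,g)\le(|A_j|-1)+(1-x)=|A_j|-x$ holds for \emph{every} $g\in A_j$ under cleanness---but it is valid.
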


\begin{proof}
By Proposition~\ref{prop:TWEF-WMEF}, it suffices to prove the statement for TWEF$(x,1-x)$.

Let $\mathcal{A}$ be a clean \mwhw{} allocation.
Assume for contradiction that for some $i,j\in N$, the TWEF$(x,1-x)$ condition from $i$ to $j$ is violated.
This means that $v_i(A_i) < v_i(A_i\cup A_j)$, and for every $g\in A_j$ it holds that
\[
\frac{v_i(A_i) + (1-x)\cdot \mgaini{A_i}{g}}{w_i} < \frac{v_i(A_j) - x\cdot \mlossi{A_j}{g}}{w_j}.
\]
By the same argument as in the proof of \Cref{thm:transfer-algo}, this implies that
\begin{equation} \label{eqn:mwhw_twef_wef}
    \frac{v_i(A_i) + (1-x)}{w_i} < \frac{v_j(A_j) - x}{w_j}.
\end{equation} 
Also, since $v_i(A_i) < v_i(A_i\cup A_j)$, submodularity implies that there exists a good $g^*\in A_j$ such that $\mgaini{A_i}{g^*} = 1$.

We now consider two cases depending on whether $x = 1$.
\paragraph{Case 1: $0 \le x < 1$.} 
If we transfer $g^*$ from $A_j$ to $A_i$, we obtain an allocation~$\mathcal{A}'$ in which $v_i(A_i') = v_i(A_i) + 1$, $v_j(A_j') = v_j(A_j) - 1$, and $v_k(A_k') = v_k(A_k)$ for all $k\in N\setminus\{i,j\}$.
Since $\mathcal{A}$ is an \mwhw{} allocation, it must be that
\begin{align*}
w_i\cdot H_{v_i(A_i), x} + w_j\cdot H_{v_j(A_j), x} \ge w_i\cdot H_{v_i(A_i)+1, x} + w_j\cdot H_{v_j(A_j)-1, x}.
\end{align*}
This is equivalent to
\begin{equation*}
    w_j \cdot \frac{1}{v_j(A_j) - x} - w_i \cdot \frac{1}{v_i(A_i) + 1 - x} \ge 0.
\end{equation*}
Algebraic manipulation gives us
\begin{equation*}
    \frac{v_i(A_i)+1-x}{w_i} \ge \frac{v_j(A_j)-x}{w_j},
\end{equation*}
which contradicts (\ref{eqn:mwhw_twef_wef}).

\paragraph{Case 2: $x = 1$.} 
From (\ref{eqn:mwhw_twef_wef}), we have that
\begin{equation} \label{eqn:mwhw_twef_wef_x=1}
    \frac{v_i(A_i)}{w_i} < \frac{v_j(A_j) - 1}{w_j}.
\end{equation}
Since $v_i(A_i) \ge 0$ and $v_j(A_j)$ is an integer, it must be that $v_j(A_j) \ge 2$.
If $v_i(A_i) = 0$, we can transfer $g^*$ from $A_j$ to $A_i$ and increase the number of agents with positive utility, contradicting the assumption that $\mathcal{A}$ is an MWHW$_1$ allocation. 
Hence, $v_i(A_i) \geq 1$.

The rest of the argument proceeds in a similar way as in Case~1.
If we transfer $g^*$ from $A_j$ to $A_i$, we obtain an allocation~$\mathcal{A}'$ in which $v_i(A_i') = v_i(A_i) + 1$, $v_j(A_j') = v_j(A_j) - 1$, and $v_k(A_k') = v_k(A_k)$ for all $k\in N\setminus\{i,j\}$.
Note that the number of agents with positive utility is the same in $\mathcal{A}$ and $\mathcal{A}'$.
Since $\mathcal{A}$ is an MWHW$_1$ allocation, it must be that
\begin{align*}
w_i\cdot H_{v_i(A_i), 1} + w_j\cdot H_{v_j(A_j), 1} \ge w_i\cdot H_{v_i(A_i)+1, 1} + w_j\cdot H_{v_j(A_j)-1, 1}.
\end{align*}
This is equivalent to
\begin{equation*}
    w_j \cdot \frac{1}{v_j(A_j) - 1} - w_i \cdot \frac{1}{v_i(A_i)} \ge 0.
\end{equation*}
Algebraic manipulation gives us
\begin{equation*}
    \frac{v_i(A_i)}{w_i} \ge \frac{v_j(A_j)-1}{w_j},
\end{equation*}
which contradicts (\ref{eqn:mwhw_twef_wef_x=1}).
\end{proof}

We now exhibit the necessity of the cleanness condition in \Cref{thm:harmonic-TWEF}.

\begin{proposition}
\label{prop:nonclean-TWEF}
There exists an instance and an allocation such that, for every $x\in [0,1]$, the allocation is \mwhw{} but does not satisfy TWEF$(x,1-x)$.
\end{proposition}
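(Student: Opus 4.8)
```latex
The plan is to construct a small explicit instance in which a non-clean allocation achieves the maximum weighted harmonic welfare while failing TWEF$(x,1-x)$ for every $x$ simultaneously. The key observation is that the cleanness assumption was used in \Cref{thm:harmonic-TWEF} precisely to translate the per-good WEF inequality into the ``size-based'' inequality \eqref{eqn:mwhw_twef_wef}; without cleanness, an agent's bundle can contain redundant goods, so that $|A_j|$ can be large even though $v_j(A_j)$ is small. I would exploit exactly this gap: put many redundant goods into one agent's bundle so that, although the utilities (and hence the harmonic welfare) are optimal, there is a good that could be transferred away to benefit an envying agent, yet the TWEF escape clause $v_i(A_i) = v_i(A_i \cup A_j)$ does not apply because $i$ genuinely values the goods in $A_j$.

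First I would fix a concrete instance with $n = 2$ agents. I would give the two agents identical (or closely related) matroid-rank valuations, and a modest number of goods, arranged so that the \emph{optimal} utility profile is essentially forced to be some fixed pair of integers $(v_1(A_1), v_2(A_2))$ regardless of which allocation attains it. The crucial step is to then pick the allocation $\mathcal{A}$ that attains this optimal utility profile but is \emph{not} clean: one agent, say agent~$2$, holds a bundle $A_2$ containing goods that do not all contribute to $v_2(A_2)$, so that $|A_2| > v_2(A_2)$. Because the utilities match the optimum, $\mathcal{A}$ is an \mwhw{} allocation for every $x$ (the value $\text{WHW}_x$ depends only on the utilities $v_i(A_i)$, not on the actual bundles, and the same is true of the $x=1$ tie-breaking on $|N^+_\mathcal{A}|$).

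Next I would verify that TWEF$(x,1-x)$ fails from agent~$1$ to agent~$2$ for every $x \in [0,1]$. For this I must check both parts of the TWEF definition. The escape clause requires $v_1(A_1) = v_1(A_1 \cup A_2)$; I would design the valuations so that agent~$1$ strictly values some good in $A_2$ on top of her own bundle, i.e.\ $v_1(A_1) < v_1(A_1 \cup A_2)$, ruling out the escape clause. Then I would show the main WEF$(x,1-x)$ inequality fails for \emph{every} $g \in A_2$ and every $x$: the right-hand side $\frac{v_1(A_2) - x\cdot \mlossi{A_2}{g}}{w_2}$ stays large (because $A_2$ is bloated and redundant goods have marginal loss $0$, so removing them does not help), while the left-hand side $\frac{v_1(A_1) + (1-x)\cdot \mgaini{A_1}{g}}{w_1}$ stays small. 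By choosing the weights and the utility values with enough slack, the strict inequality will hold uniformly in $x$, since both sides are affine in $x$ and I can make the gap at $x=0$ and $x=1$ strictly positive.

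The main obstacle I anticipate is arranging the numbers so that a \emph{single} allocation fails TWEF for \emph{all} $x$ at once, rather than needing a different instance for each $x$ as in the MWNW lower bound of \citet{ChakrabortySeSu22}. This is actually easier here than it looks: because the offending inequality is linear in $x$ and the redundant goods contribute nothing to the loss term, the right-hand side decreases only mildly in $x$ while the left-hand side can be kept dominated throughout the interval. I would therefore pin down the utility gap and the weight ratio first, check the two endpoints $x=0$ and $x=1$ explicitly, and invoke convexity/linearity to cover the interior. The remaining care is to confirm that the utility profile I use is genuinely welfare-maximal (so that $\mathcal{A}$ really is \mwhw{}), which I would settle by a direct case check over the few possible integer utility profiles in the small instance.
```
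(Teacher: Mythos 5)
Your blueprint matches the paper's proof in every structural respect: the paper also exploits non-cleanness by bloating the \emph{envied} agent's bundle with goods worthless to their owner but valuable to the envier, observes that $\text{WHW}_x$ depends only on the utility profile so any allocation realizing the optimal profile is \mwhw{}, rules out the escape clause via $v_i(A_i) < v_i(A_i\cup A_j)$, and defeats the per-good inequality uniformly in $x$ by a linear-in-$x$ comparison. However, as written your proposal has a genuine gap: the proposition is an existence statement, and you never exhibit the witness. You give no valuations, weights, or bundles, and the substance of the paper's proof lies precisely in the parts you defer. Concretely, the paper takes $w_1=1$, $w_2=2$, agent~$1$ additive with value $1$ only for $g_1$, and $v_2(S)=\min\{3,|S|\}$ if $g_1\notin S$ and $\min\{4,|S|\}$ otherwise; it must then \emph{verify} that $v_2$ is matroid-rank (a nontrivial case analysis) and that a single utility profile is $\text{WHW}_x$-optimal \emph{uniformly} over $x\in[0,1]$, including the lexicographic $x=1$ rule that first maximizes $|N^+_\mathcal{A}|$. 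Uniformity is secured by the one inequality $\frac{1}{1-x}>\frac{2}{4-x}$, forcing $g_1$ to agent~$1$ for every $x$; your ``direct case check over the few possible integer utility profiles'' is plausible but not carried out, and it is exactly where different values of $x$ could in principle favor different profiles (cf.\ the per-$x$ instances in the MWNW impossibility of \citet{ChakrabortySeSu22} that you yourself cite).

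There is also a local error in your justification of the failing inequality: you argue the right-hand side stays large because the redundant goods ``have marginal loss $0$, so removing them does not help.'' This conflates redundancy with respect to the \emph{owner} of the bundle with the loss term in TWEF, which is the \emph{envier's} loss $\mlossi{A_j}{g}$. In the paper's instance every good in the envied bundle $A_1=\{g_1,g_2,g_3\}$ has envier-loss exactly $1$ (e.g.\ $\mloss{2}{A_1}{g}=1$ for all $g\in A_1$), and the condition still fails --- not because the losses vanish, but because under matroid-rank valuations the subtracted term is at most $x\le 1$, which is dominated by the weighted utility gap: $\frac{3+(1-x)}{2} < 3-x$ for all $x\in[0,1]$. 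A correct completion of your plan needs this argument rather than the one you give; with it, and with an explicit instance along the paper's lines, your outline does close into the intended proof.
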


\begin{proof}
Consider an instance with $n = 2$ agents whose weights are $w_1 = 1$ and $w_2 = 2$, and $m = 6$ goods.
Agent~$1$ has an additive valuation with value~$1$ for $g_1$ and $0$ for the remaining goods.
Agent~$2$'s valuation~$v_2$ is given by
    \begin{equation*}
        v_2(S) =
        \begin{cases} 
          \min\{3, |S|\} & \text{ if } g_1\not\in S;\\
          \min\{4, |S|\} & \text{ if } g_1\in S,
       \end{cases}
    \end{equation*}
for each bundle $S\subseteq G$.

First, we claim that $v_2$ is matroid-rank. 
The marginal gain from adding~$g_1$ is always~$1$, while the marginal gain from adding any other good is either $0$ or $1$.
To establish submodularity, let $G'\subseteq G''\subseteq G$ and $g\in G\setminus G''$, and assume that $v_2(G'\cup\{g\}) = v_2(G')$; it suffices to prove that $v_2(G''\cup\{g\}) = v_2(G'')$ as well.
From our earlier discussion, it must be that $g\neq g_1$.
If $g_1\in G'$, then $|G''|\ge |G'| \ge 4$ and thus $v_2(G''\cup\{g\}) = v_2(G'')$.
Assume therefore that $g_1\not\in G'$, which means that $|G'| \ge 3$.
If $G'' = G'$, then $v_2(G''\cup\{g\}) = v_2(G'')$ holds trivially.
Otherwise, we have $|G''| \ge |G'| + 1 \ge 4$, and again $v_2(G''\cup\{g\}) = v_2(G'')$.

Fix $x\in [0,1]$.
If $x = 1$, any \mwhw{} allocation must give $g_1$ to agent~$1$, which leaves agent~$2$ with a utility of at most $3$.
Else, for $x < 1$, the maximum weighted harmonic welfare achievable by giving $g_1$ to agent~$1$ is
\[
\frac{1}{1-x} + 2\cdot\left(\frac{1}{1-x}+\frac{1}{2-x}+\frac{1}{3-x}\right),
\]
whereas the maximum achievable by giving $g_1$ to agent~$2$ is
\[ 2\cdot\left(\frac{1}{1-x}+\frac{1}{2-x}+\frac{1}{3-x} + \frac{1}{4-x}\right).
\]
Since $\frac{1}{1-x} = \frac{2}{2-2x} > \frac{2}{4-x}$, every \mwhw{} allocation must again give $g_1$ to agent~$1$.
In particular, for any $x\in [0,1]$, the allocation $\mathcal{A} = (\{g_1,g_2,g_3\}, \{g_4,g_5,g_6\})$ is an \mwhw{} allocation.

To finish the proof, we show that $\mathcal{A}$ violates the TWEF$(x,1-x)$ condition from agent~$2$ toward agent~$1$.
Note that $v_2(A_2) = 3 < 4 = v_2(A_2\cup A_1)$.
Moreover, for any $g\in A_1$, it holds that
\[
\frac{v_2(A_2) + (1-x)\cdot \mgain{2}{A_2}{g}}{w_2} \leq \frac{3 + (1-x)}{2} < 3-x =  \frac{v_2(A_1) - x\cdot \mloss{2}{A_1}{g}}{w_1}.
\]
Hence, the TWEF$(x,1-x)$ condition from agent~$2$ to agent~$1$ is not satisfied.
\end{proof}

By applying results from the recent work of \citet{ViswanathanZi22}, we show in \Cref{app:harmonic-computation} that an \mwhw{} allocation (which additionally maximizes the unweighted utilitarian welfare across all allocations) can be found in polynomial time.

Finally, we remark that it may be interesting to consider harmonic welfare beyond binary valuations.
In \Cref{app:harmonic-more}, we prove that for agents with equal weights and additive valuations, if the value of every agent for every good is an integer (in which case the harmonic welfare is well-defined), then an allocation maximizing the harmonic welfare is always EF1.

\subsection*{Acknowledgments}

This work was partially supported by the Deutsche Forschungsgemeinschaft under grant BR 4744/2-1, by the Singapore Ministry of Education under grant number MOE-T2EP20221-0001, and by an NUS Start-up Grant.

\bibliographystyle{plainnat}
\bibliography{main}

\appendix

\section{Round-Robin Algorithm and EF1}
\label{app:roundrobin-EF1}

In the unweighted setting, it is well-known that if agents have additive valuations, then the round-robin algorithm always produces an EF1 allocation (e.g., \citep[p.~7]{CaragiannisKuMo19}).
A natural question is therefore whether the MEF1 condition in Corollary~\ref{cor:pickseq-unweighted} can be strengthened to EF1.
We show that the answer is negative, even for matroid-rank valuations.

Consider an instance with $m = 8$ goods and $n = 2$ agents with equal weights.
Agent~$1$ has an additive valuation with value~$1$ for each of $g_4$ and $g_8$, and $0$ for the remaining goods.
The value of agent~$2$ for any bundle~$S$ is 
\[
v_2(S) := |S\cap\{g_4\}| + |S\cap\{g_8\}| + \min\{1, |S\cap\{g_1,g_2,g_3\}|\} + \min\{1, |S\cap\{g_5,g_6,g_7\}|\}.
\]
One can check that $v_2$ is matroid-rank.\footnote{In fact, $v_2$ belongs to a subclass of matroid-rank valuations called \emph{$(0,1)$-OXS} \citep{BenabbouChIg21}.}

Assume that the round-robin algorithm lets agent~$1$ pick first, and that the agents break ties in the goods lexicographically.
Under these assumptions, agent~$1$ picks $g_4$, agent~$2$ picks $g_1$, agent~$1$ picks $g_8$, agent~$2$ picks $g_5$, agent~$1$ picks $g_2$, agent~$2$ picks $g_3$, agent~$1$ picks $g_6$, and agent~$2$ picks $g_7$.
Hence, $A_1 = \{g_2,g_4,g_6,g_8\}$ and $A_2 = \{g_1,g_3,g_5,g_7\}$, and so $v_2(A_2) = 2 < 3 = v_2(A_1\setminus\{g\})$ for every $g\in A_1$.
This means that agent~$2$ is not EF1 toward agent~$1$.
Note also that adding the condition $v_i(A_i) = v_i(A_i\cup A_j)$ as in our definition of TWEF$(x,y)$ does not help circumvent this negative result, as we have $v_2(A_2) = 2 < 4 = v_2(A_2\cup A_1)$.

\section{Computing \mwhw{} allocations}
\label{app:harmonic-computation}

Recently, \citet{ViswanathanZi22} introduced a framework for efficiently computing allocations that maximize a range of fairness objectives.
Formally, a \emph{fairness objective} is a function that maps each allocation~$\mathcal{A}$ (more precisely, the utility vector $(v_1(A_1),\dots,v_n(A_n))$ induced by $\mathcal{A}$) to a totally ordered space.
A \emph{gain function} maps each clean allocation and each agent~$i\in N$ to a real number.\footnote{\citet{ViswanathanZi22} also allowed the output of a gain function to be a vector, but we do not need that.}
They showed that if a fairness objective admits a gain function satisfying certain properties, then an allocation that maximizes the objective (along with the unweighted utilitarian welfare) can be computed in polynomial time.

\begin{lemma}[\citet{ViswanathanZi22}]
\label{lem:yankee}
    Suppose that the fairness objective $\Psi$ admits a gain function $\phi$ such that the following conditions are satisfied:
    
    \begin{enumerate}[(i)]
        \item For any two allocations $\mathcal{A}$ and $\mathcal{A}'$, if $\mathcal{A}$ Pareto-dominates $\mathcal{A}'$, then $\Psi(\mathcal{A}) \geq \Psi(\mathcal{A}')$.
        \item For any clean allocation $\mathcal{A}$ and any agent $i\in N$, let $\mathcal{A}^{+i}$ be an allocation resulting from giving a good~$g\not\in \bigcup_{k\in N}A_k$ such that $\mgaini{A_i}{g} = 1$ to $i$.
        Define $\mathcal{A}^{+j}$ analogously.
        If $\phi(\mathcal{A}, i) \geq \phi(\mathcal{A}, j)$, then $\Psi(\mathcal{A}^{+i}) \geq \Psi(\mathcal{A}^{+j})$; equality holds if and only if $\phi(\mathcal{A}, i) = \phi(\mathcal{A}, j)$.
        \item For any clean allocations $\mathcal{A}$ and $\mathcal{A}'$, if $|A_i| \leq |A'_i|$, then $\phi(\mathcal{A},i) \geq \phi(\mathcal{A}',i)$; equality holds if and only if $|A_i| = |A'_i|$.
        \item The gain function $\phi$ can be computed in polynomial time.
    \end{enumerate}
    Then, there exists a polynomial-time algorithm that computes an allocation that maximizes the fairness objective $\Psi$ as well as the unweighted utilitarian welfare.
\end{lemma}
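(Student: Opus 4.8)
The plan is to exhibit an explicit greedy algorithm of the ``Yankee Swap'' type and to prove its correctness in two stages, first for the unweighted utilitarian welfare and then for the objective~$\Psi$. Throughout I maintain a clean allocation~$\mathcal{A}$ together with its utility vector $(v_1(A_1),\dots,v_n(A_n))$, which under matroid-rank valuations equals $(|A_1|,\dots,|A_n|)$ by \citep[Prop.~3.3]{BenabbouChIg21}. The algorithm repeatedly executes the following step: among all agents~$i$ that can still strictly increase their utility via a \emph{gain path} --- a chain of single-good reassignments, beginning at a currently unowned good and ending at~$i$, that raises $v_i(A_i)$ by one while leaving every other agent's utility unchanged --- select one with the largest value $\phi(\mathcal{A},i)$, breaking ties arbitrarily, and perform its gain path. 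By condition~(iv) each $\phi(\mathcal{A},i)$ is computable in polynomial time; gain paths can be found by a shortest-augmenting-path search in the exchange graph of the agents' matroids; and since the total utility rises by one per step, there are at most $m$ steps, so the procedure runs in polynomial time.

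Next I would establish welfare maximality. The existence of a gain path for agent~$i$ is precisely the statement that the size of the matroid-union independent set underlying~$\mathcal{A}$ can be augmented through~$i$; by the matroid union theorem, \emph{no} agent admits a gain path exactly when the current allocation attains the maximum total size, i.e.\ maximum utilitarian welfare. Since the algorithm halts only when no gain path remains, its output is welfare-maximal. Moreover, every individual step is a genuine Pareto-improvement (one agent gains, nobody loses), so condition~(i) guarantees that $\Psi$ never decreases along the run. Because any clean, non-welfare-maximal allocation likewise admits a Pareto-improving gain path, condition~(i) also shows that the global maximum of $\Psi$ over \emph{all} allocations is attained at some welfare-maximal allocation; hence it suffices to prove that the output maximizes $\Psi$ among welfare-maximal allocations.

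The heart of the argument is this global optimality of the greedy $\phi$-ordering for~$\Psi$, and I expect it to be the main obstacle. Here I would use two structural facts. First, condition~(iii) forces $\phi(\mathcal{A},i)$ to be a strictly decreasing function $f_i(v_i(A_i))$ of agent~$i$'s utility that is the same across all clean allocations; this lets me attach a well-defined priority $f_i(k)$ to each ``increment'' $(i,k)$ that raises~$i$ from utility~$k$ to~$k+1$. Second, for matroid-rank valuations the set of utility vectors of welfare-maximal allocations is the set of bases of an integral polymatroid, and the associated increments form a matroid; consequently the greedy-by-priority rule that the algorithm implements produces a maximum-weight base, and this set enjoys the base-exchange property. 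Given a competing welfare-maximal allocation with utility vector $u^{O}$ that differs from the greedy output $u^{G}$, base exchange yields agents $i,j$ with $u^{O}_i < u^{G}_i$ and $u^{O}_j > u^{G}_j$ such that $u^{O}+e_i-e_j$ is again a welfare-maximal utility vector, and the max-weight-base exchange characterization guarantees $f_i(u^{O}_i)\ge f_j(u^{O}_j-1)$ for the relevant exchange.

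Writing $w := u^{O}-e_j$, the comparison $\Psi(u^{O}+e_i-e_j)$ versus $\Psi(u^{O})$ is exactly $\Psi(\mathcal{A}^{+i})$ versus $\Psi(\mathcal{A}^{+j})$ for a clean allocation~$\mathcal{A}$ with utility vector~$w$, where $\phi(\mathcal{A},i)=f_i(u^{O}_i)$ and $\phi(\mathcal{A},j)=f_j(u^{O}_j-1)$; so condition~(ii) converts the priority inequality into $\Psi(u^{O}+e_i-e_j)\ge\Psi(u^{O})$. Repeating such exchanges morphs $u^{O}$ into $u^{G}$ without ever decreasing~$\Psi$, whence $\Psi(u^{G})\ge\Psi(u^{O})$, establishing optimality. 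The delicate points will be making the reduction to the polymatroid base-exchange property fully rigorous --- this is where the matroid-rank structure is indispensable --- and handling ties carefully, using the ``equality if and only if'' clauses in conditions~(ii) and~(iii) to ensure the exchanges are strictly $\Psi$-nondecreasing in the right direction.
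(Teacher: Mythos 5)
A preliminary remark: the paper itself contains no proof of this statement---it is imported wholesale from \citet{ViswanathanZi22} (note the attribution in the lemma header), so your attempt can only be measured against that source's ``Yankee Swap'' argument. In outline you do reconstruct it faithfully: greedy selection by $\phi$ among agents that can still augment via transfer paths, the matroid-union/polymatroid structure of the feasible utility vectors (so that welfare-maximal vectors are the bases of an integral polymatroid), and an exchange induction that morphs a competing optimum $u^{O}$ into the greedy output $u^{G}$ without decreasing $\Psi$.

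However, there is a genuine gap at exactly the step you flag and then merely assert: that ``the max-weight-base exchange characterization guarantees $f_i(u^{O}_i)\ge f_j(u^{O}_j-1)$ for the relevant exchange.'' Plain polymatroid base exchange only gives, for $i$ with $u^{O}_i<u^{G}_i$, \emph{some} $j$ with $u^{O}_j>u^{G}_j$ such that $u^{O}+e_i-e_j$ is a base; it says nothing about the priorities of that particular pair, and since $\Psi$ is not a linear function of increments, the matroid greedy theorem does not apply to $\Psi$ directly. To close the gap one needs two further ingredients. First, the \emph{symmetric} exchange axiom for integral base polyhedra (M-convexity), which lets you choose $j$ so that $u^{G}-e_i+e_j$ is also a base; local optimality of $u^{G}$ for the linear surrogate $W(u)=\sum_i\sum_{k<u_i}f_i(k)$ then yields $f_i(u^{G}_i-1)\ge f_j(u^{G}_j)$, and the strict monotonicity of $f$ from condition~(iii) chains this to $f_i(u^{O}_i)\ge f_j(u^{O}_j-1)$. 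But even local $W$-optimality of the greedy output is nontrivial because of \emph{blocking}: greedy may serve a low-priority agent while a higher-priority agent has no gain path, so you must prove that a blocked agent stays blocked (a tight set $S$ with $u(S)=r(S)$ remains tight as coordinates only increase)---a fact your sketch never states. Second, to invoke condition~(ii) you need a single clean allocation $\mathcal{A}$ realizing the vector $w=u^{O}-e_j$ from which \emph{both} $\mathcal{A}^{+i}$ and $\mathcal{A}^{+j}$ arise by handing over an unallocated good with marginal gain $1$; feasibility of $w+e_i$ and $w+e_j$ in the polymatroid does not give this for an arbitrary clean representative of $w$, so a realization lemma (the transfer-path machinery of the original paper) is required. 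Finally, the tie-handling and the $-\infty$/lexicographic values of $\Psi$ (needed for the $x=1$ application) must be threaded through the ``equality iff'' clauses of (ii) and (iii), which you anticipate but do not carry out. In short: correct skeleton, but the load-bearing priority inequality, the blocked-stays-blocked monotonicity, and the realizability step for applying (ii) are all missing, and they constitute the actual substance of the cited proof.
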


We now apply Lemma~\ref{lem:yankee} to \mwhw{}.

\begin{theorem}
\label{thm:harmonic-computation}
    For any $x\in [0,1]$, under matroid-rank valuations, an \mwhw{} allocation that maximizes the unweighted utilitarian welfare can be computed in polynomial time.
\end{theorem}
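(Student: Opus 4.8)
The plan is to apply Lemma~\ref{lem:yankee} by exhibiting an appropriate gain function $\phi$ for the \mwhw{} objective and verifying the four conditions. The natural choice, guided by the structure of the modified harmonic numbers, is to define the gain of agent~$i$ in a clean allocation~$\mathcal{A}$ as the \emph{marginal increase in weighted harmonic welfare} that $i$ would contribute upon receiving one more useful good, namely
\begin{equation*}
\phi(\mathcal{A}, i) := w_i\cdot\bigl(H_{|A_i|+1, x} - H_{|A_i|, x}\bigr).
\end{equation*}
For $x < 1$ and $|A_i| = k$ this equals $\frac{w_i}{k+1-x}$, while for $x = 1$ it requires separate bookkeeping for the $k = 0$ case (where the jump from $H_{0,1} = -\infty$ to $H_{1,1} = 0$ reflects the priority of first giving each agent positive utility). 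Since $\mathcal{A}$ is clean, $|A_i| = v_i(A_i)$, so this is a legitimate, polynomial-time-computable gain function.

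First I would set $\Psi := \text{WHW}_x$ (with the lexicographic refinement counting agents of positive utility first when $x = 1$), and check condition~(i): if $\mathcal{A}$ Pareto-dominates $\mathcal{A}'$, then $v_i(A_i) \ge v_i(A'_i)$ for all~$i$, and since each sequence $H_{0,x}, H_{1,x}, \dots$ is increasing, monotonicity of $\Psi$ follows immediately (for $x = 1$ one also notes that the dominating allocation gives positive utility to a superset of agents). For condition~(iii), observe that $\phi(\mathcal{A}, i) = w_i/(|A_i|+1-x)$ is strictly decreasing in $|A_i|$ for $x<1$, so $|A_i| \le |A'_i|$ gives $\phi(\mathcal{A},i) \ge \phi(\mathcal{A}',i)$ with equality iff $|A_i| = |A'_i|$; the $x=1$ case is handled analogously using $\phi(\mathcal{A},i) = w_i/|A_i|$ for $|A_i|\ge 1$ and the infinite gain when $|A_i| = 0$. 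Condition~(iv) is clear since $H_{k,x}$ is a finite sum computable in polynomial time.

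The crux is condition~(ii), the ``telescoping consistency'' between $\phi$ and $\Psi$. Here I would compute directly: giving a useful good to agent~$i$ raises $\Psi$ by exactly $\phi(\mathcal{A}, i)$, since all other agents' utilities are unchanged and $i$'s utility rises by one. Hence $\Psi(\mathcal{A}^{+i}) = \Psi(\mathcal{A}) + \phi(\mathcal{A},i)$ and $\Psi(\mathcal{A}^{+j}) = \Psi(\mathcal{A}) + \phi(\mathcal{A},j)$, from which $\phi(\mathcal{A},i) \ge \phi(\mathcal{A},j) \Leftrightarrow \Psi(\mathcal{A}^{+i}) \ge \Psi(\mathcal{A}^{+j})$ with the stated equality condition follows trivially. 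I expect the main obstacle to be the edge cases at $x = 1$: there $\Psi$ is defined lexicographically (first maximize $|N^+_\mathcal{A}|$, then the harmonic sum), and the gain function must respect this ordering, so giving a good to an agent with $|A_i| = 0$ must always outrank giving one to an agent already at positive utility. I would verify that assigning $\phi(\mathcal{A}, i) = +\infty$ whenever $|A_i| = 0$ (matching $H_{1,1} - H_{0,1}$) makes condition~(ii) consistent with the lexicographic refinement, and confirm that this does not disturb conditions~(i) and~(iii). Once all four conditions are checked, Lemma~\ref{lem:yankee} yields the claimed polynomial-time algorithm.
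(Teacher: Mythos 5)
Your proposal follows essentially the same route as the paper's proof: it invokes Lemma~\ref{lem:yankee} with $\Psi$ equal to the weighted harmonic welfare (lexicographically refined by $|N^+_\mathcal{A}|$ when $x=1$) and the gain function $\phi(\mathcal{A},i) = w_i/(|A_i|+1-x)$, verifying the four conditions exactly as the paper does, including the special treatment of zero-utility agents at $x=1$. The only cosmetic difference is your sentinel value $+\infty$ when $|A_i|=0$ and $x=1$: since the lemma requires a real-valued gain function, the paper instead uses the finite value $w_{\max}+1$, which dominates every other attainable gain and makes the same lexicographic comparisons go through.
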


\begin{proof}
Fix $x\in [0,1]$, and let 
\[
\Psi_x(\mathcal{A}) = 
\begin{cases} 
    \sum_{i \in N} w_i \cdot H_{v_i(A_i),x} & \text{ if } x < 1;\\
    \left(|N^+_\mathcal{A}|, \sum_{i \in N^+_\mathcal{A}} w_i \cdot H_{v_i(A_i),1}\right) & \text{ if } x = 1, 
\end{cases}
\]
where for $x = 1$ we compare the ordered pairs $\Psi(\mathcal{A})$ for different allocations~$\mathcal{A}$ lexicographically.
By definition of \mwhw{}, an allocation~$\mathcal{A}$ that maximizes $\Psi_x(\mathcal{A})$ is also an \mwhw{} allocation.
It is clear that $\Psi_x$ satisfies condition~(i) of Lemma~\ref{lem:yankee}.

Next, let $w_{\max} = \max_{i\in N} w_i$.
We define the gain function $\phi_x$ as follows:
\begin{equation*}
        \phi_x(\mathcal{A},i) = 
        \begin{cases}
            \frac{w_i}{|A_i|+1-x} & \text{ if } |A_i| > 0 \text{ or } x < 1;\\
            w_{\max} + 1 & \text{ if } |A_i| = 0 \text{ and } x = 1.
        \end{cases}
\end{equation*}
Since $\phi_x$ can be computed efficiently, condition~(iv) of Lemma~\ref{lem:yankee} is satisfied.
Condition~(iii) is also trivially met.

It remains to show that condition~(ii) is satisfied.
Let $\mathcal{A}$ be a clean allocation, so $v_i(A_i) = |A_i|$ for all~$i\in N$.
Consider any $i,j\in N$.
First, suppose that $x < 1$.
We have 
\begin{align*}
\phi_x(\mathcal{A}, i) \geq \phi_x(\mathcal{A}, j)
&\Longleftrightarrow \frac{w_i}{|A_i|+1-x} \geq \frac{w_j}{|A_j|+1-x} \\
&\Longleftrightarrow w_i\cdot (H_{v_i(A_i)+1,x} - H_{v_i(A_i),x}) \ge w_j\cdot (H_{v_j(A_j)+1,x} - H_{v_j(A_j),x}) \\
&\Longleftrightarrow \Psi_x(\mathcal{A}^{+i}) \geq \Psi_x(\mathcal{A}^{+j}).
\end{align*}
This reasoning also shows that $\phi_x(\mathcal{A}, i) = \phi_x(\mathcal{A}, j)$ if and only if $\Psi_x(\mathcal{A}^{+i}) = \Psi_x(\mathcal{A}^{+j})$.
Hence, (ii) is satisfied.

Next, suppose that $x = 1$.
We consider four cases.

\paragraph{Case 1: $|A_i| = |A_j| = 0$.} 
We have $\phi_1(\mathcal{A}, i) = \phi_1(\mathcal{A}, j) = w_{\text{max}}+1$.
Also, 
\[
\Psi_1(\mathcal{A}^{+i}) = \Psi_1(\mathcal{A}^{+j}) = \left(|N^+_\mathcal{A}| + 1, \sum_{k \in N^+_\mathcal{A}} w_k \cdot H_{v_k(A_k),1}\right),
\]
where the second coordinate is the same as in $\Psi_1(\mathcal{A})$ because $H_{1,1} = 0$.
Hence, (ii) holds.

\paragraph{Case 2: $|A_i| > |A_j| = 0$.} 
We have $\phi_1(\mathcal{A}, j) = w_{\text{max}} + 1 > w_i \ge \phi_1(\mathcal{A}, i)$, so (ii) holds trivially since the assumption $\phi_1(\mathcal{A}, i) \ge \phi_1(\mathcal{A}, j)$ is not satisfied.

\paragraph{Case 3: $|A_j| > |A_i| = 0$.} 
We have $\phi_1(\mathcal{A}, i) = w_{\text{max}} + 1 > w_j \ge \phi_1(\mathcal{A}, j)$.
Also, the first coordinate of $\Psi_1(\mathcal{A}^{+i})$ is $|N^+_\mathcal{A}| + 1$, whereas that of $\Psi_1(\mathcal{A}^{+j})$ is $|N^+_\mathcal{A}|$, which means that $\Psi_1(\mathcal{A}^{+i}) > \Psi_1(\mathcal{A}^{+j})$.
Hence, (ii) holds.

\paragraph{Case 4: $|A_i|, |A_j| > 0$.}
In this case, the same argument as for $x < 1$ applies.

\vspace{3mm}
Therefore, (ii) is satisfied in all cases, and \Cref{thm:harmonic-computation} follows readily from Lemma~\ref{lem:yankee}.
\end{proof}

\section{More on Harmonic Welfare}
\label{app:harmonic-more}

In this section, we assume that all agents have additive valuations, and demonstrate some potential (and limits) of harmonic welfare.
Recall that $H_k = 1+\frac{1}{2}+\dots+\frac{1}{k}$ for each positive integer~$k$ and $H_0 = 0$.

\begin{definition}[MHW]
Given an instance with equal weights and additive valuations in which each agent's value for each good is an integer, an allocation $\mathcal{A}$ is a \emph{maximum harmonic welfare (MHW)} allocation if it maximizes the \emph{harmonic welfare} $\text{HW}(\mathcal{A}) := \sum_{i\in N} H_{v_i(A_i)}$.
\end{definition}

Note that in the unweighted setting, MHW allocations are the same as MWHW$_0$ allocations defined in \Cref{sec:harmonic}.
To establish the EF1 guarantee of MHW allocations, we will use the following lemma, which follows directly by inspecting the left and right Riemann sums of the function $f(x) = 1/x$.

\begin{lemma}
\label{lem:Riemann}
For integers $b \ge a \ge 1$, it holds that
$
\sum_{k = a}^b\frac{1}{k} > \ln\left(\frac{b+1}{a}\right)
$.
Moreover, if $a \ge 2$, then
$
\sum_{k = a}^b\frac{1}{k} < \ln\left(\frac{b}{a-1}\right)
$.
\end{lemma}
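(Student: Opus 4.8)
The plan is to read both inequalities as comparisons between the harmonic sum $\sum_{k=a}^b \frac{1}{k}$ and integrals of $f(x) = \frac{1}{x}$, exploiting that $f$ is strictly positive and strictly decreasing on $[1,\infty)$. The single underlying fact I would use is that, for a strictly decreasing function, the value $f(k) = \frac{1}{k}$ strictly overestimates the integral of $f$ over the unit interval $[k,k+1]$ lying to its right, and strictly underestimates the integral over the unit interval $[k-1,k]$ lying to its left.

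For the lower bound I would bound each term from below by the integral over the interval to its right. Since $\frac{1}{x} < \frac{1}{k}$ for every $x \in (k, k+1]$, we have $\int_k^{k+1} \frac{1}{x}\diff x < \frac{1}{k}$ for each integer $k \ge 1$. Summing this over $k = a, a+1, \dots, b$ telescopes the integrals into $\int_a^{b+1}\frac{1}{x}\diff x = \ln\!\left(\frac{b+1}{a}\right)$, giving $\sum_{k=a}^b \frac{1}{k} > \ln\!\left(\frac{b+1}{a}\right)$. This is precisely the statement that the left Riemann sum strictly overestimates the integral of a decreasing function, so no hypothesis beyond $a \ge 1$ is needed here.

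For the upper bound I would symmetrically bound each term from above by the integral over the interval to its left. Since $\frac{1}{x} > \frac{1}{k}$ for every $x \in [k-1, k)$, we get $\frac{1}{k} < \int_{k-1}^k \frac{1}{x}\diff x$; summing over $k = a, \dots, b$ telescopes to $\int_{a-1}^b \frac{1}{x}\diff x = \ln\!\left(\frac{b}{a-1}\right)$, yielding $\sum_{k=a}^b \frac{1}{k} < \ln\!\left(\frac{b}{a-1}\right)$. The hypothesis $a \ge 2$ enters exactly here: it guarantees $k - 1 \ge 1$ throughout the range, so the leftward intervals never reach the singularity of $\frac{1}{x}$ at the origin and $f$ remains well-defined and decreasing on all of $[a-1, b]$.

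There is essentially no serious obstacle, as this is a standard Riemann-sum estimate; the only point meriting care is the role of $a \ge 2$ in the second part, which is what keeps $\ln\!\left(\frac{b}{a-1}\right)$ finite and the comparison valid (for $a = 1$ the bound degenerates into $\ln(b/0)$ and fails). The strictness of both inequalities is inherited from the strict monotonicity of $\frac{1}{x}$ on each open subinterval, so both comparisons are strict as claimed.
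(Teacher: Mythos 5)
Your proof is correct and follows exactly the route the paper indicates: it states that the lemma ``follows directly by inspecting the left and right Riemann sums of the function $f(x) = 1/x$,'' which is precisely your term-by-term comparison with $\int_k^{k+1}\frac{1}{x}\diff x$ and $\int_{k-1}^{k}\frac{1}{x}\diff x$ followed by telescoping. Your additional remark correctly pinpointing why $a \ge 2$ is needed for the upper bound is a nice touch, but the argument is the same as the paper's.
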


\begin{theorem}
\label{thm:MHW-EF1}
Suppose that all agents have equal weights and additive valuations, and each agent's value for each good is an integer.
Then, any MHW allocation satisfies EF1.
\end{theorem}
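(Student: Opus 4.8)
The plan is to argue by contradiction: assume an MHW allocation $\mathcal{A}$ violates EF1, exhibit a single good whose transfer strictly increases the harmonic welfare, and thereby contradict maximality. First I would fix a violating ordered pair $i,j$, so that $A_j\neq\emptyset$ and, for every $g\in A_j$, $v_i(A_i)<v_i(A_j)-v_i(g)$. I would abbreviate $a:=v_i(A_i)$, $b:=v_j(A_j)$, $c:=v_i(A_j)$, and $p_g:=v_i(g)$, $q_g:=v_j(g)$. Since all these quantities are integers, the violation sharpens to $c\ge a+p_g+1$ for every $g\in A_j$; taking any $g$ with $p_g>0$ then forces $c>0$, so the set $B:=\{g\in A_j : v_i(g)>0\}$ is nonempty.

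The key idea, following the Nash-welfare argument of \citet{CaragiannisKuMo19}, is to transfer not the good most valued by $i$ but the good $g^*\in B$ minimizing the ratio $q_g/p_g$. I would first dispose of the degenerate case $q_{g^*}=0$: here transferring $g^*$ leaves $v_j(A_j)$ unchanged while strictly raising $v_i(A_i)$, so $\mathrm{HW}$ strictly increases, a contradiction. For the main case $q_{g^*}>0$, the mediant inequality together with the identities $v_i(B)=c$ and $v_j(B)\le b$ gives $\frac{q_{g^*}}{p_{g^*}}\le \frac{v_j(B)}{v_i(B)}\le \frac{b}{c}$. Combining this with $c\ge a+p_{g^*}+1$ yields, after a short manipulation, both $b-q_{g^*}\ge 1$ and the crucial inequality
\[
\frac{a+p_{g^*}+1}{a+1}\ \ge\ \frac{b}{b-q_{g^*}}.
\]

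It then remains to show that transferring $g^*$ increases $\mathrm{HW}$, i.e.\ that $i$'s harmonic gain $H_{a+p_{g^*}}-H_a=\sum_{k=a+1}^{a+p_{g^*}}\tfrac1k$ exceeds $j$'s harmonic loss $H_b-H_{b-q_{g^*}}=\sum_{k=b-q_{g^*}+1}^{b}\tfrac1k$. Here I would invoke the Riemann-sum bounds of \Cref{lem:Riemann}: the lower bound gives $i$'s gain $>\ln\frac{a+p_{g^*}+1}{a+1}$, and the upper bound (applicable precisely because $b-q_{g^*}\ge 1$) gives $j$'s loss $<\ln\frac{b}{b-q_{g^*}}$. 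The displayed inequality makes the two logarithms line up, so the gain strictly exceeds the loss, contradicting the maximality of $\mathcal{A}$. I expect the main obstacle to be exactly the choice of $g^*$ and the bookkeeping around integrality and the edge cases (the $q_{g^*}=0$ branch, and guaranteeing $b-q_{g^*}\ge 1$ so that the upper Riemann bound is valid); once the ratio-minimizing good is identified and these inequalities are assembled, the two Riemann estimates finish the argument.
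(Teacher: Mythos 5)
Your proof is correct, and it organizes the argument in the logically dual direction to the paper's. The paper assumes the EF1 violation, applies the no-improving-transfer condition of MHW to \emph{every} good $g\in B$, converts each such condition via the two Riemann bounds of \Cref{lem:Riemann} into the per-good ratio inequality $v_i(g)/v_j(g) < v_i(A_j)/v_j(A_j)$, and then aggregates these over $B$ (a mediant computation) to reach a contradiction; to legitimize the upper Riemann bound it first establishes $|B|\ge 2$, which forces $v_j(g)\le v_j(A_j)-1$ for every $g\in B$. You instead run the mediant step first: picking $g^*\in B$ minimizing $q_g/p_g$ gives $q_{g^*}/p_{g^*}\le v_j(B)/v_i(B)\le b/c$, which combined with the integrality-sharpened violation $c\ge a+p_{g^*}+1$ yields the chain $\frac{a+p_{g^*}+1}{a+1}\ge\frac{c}{c-p_{g^*}}\ge\frac{b}{b-q_{g^*}}$ (both steps check out), and the same two Riemann bounds then show that the single transfer of $g^*$ strictly increases the harmonic welfare, directly contradicting maximality. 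What your route buys is a constructive improving move in the style of Caragiannis et al.'s MNW argument, plus slightly lighter bookkeeping: you obtain $b-q_{g^*}\ge 1$ (needed for the upper Riemann bound) from integrality together with $b-q_{g^*}\ge b(c-p_{g^*})/c>0$, sidestepping the paper's $|B|\ge 2$ argument, and your explicit $q_{g^*}=0$ branch plays the role of the paper's observation that $v_j(g)>0$ for all $g\in B$. What the paper's route buys is uniformity: it never has to single out a particular good, and the per-good ratio inequalities make the final contradiction a one-line summation over $B$.
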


\begin{proof}
Let $\mathcal{A}$ be an MHW allocation, and assume for contradiction that for some pair of agents $i,j\in N$, it holds that $A_j\ne\emptyset$ and $v_i(A_i) < v_i(A_j) - v_i(g)$ for all $g\in A_j$.
Since each agent's value for each good is an integer, we have 
\begin{align}
\label{eq:EF1-integer}
v_i(A_i) \le v_i(A_j) - v_i(g) - 1
\end{align}
for all $g \in A_j$.
In particular, $v_i(A_j) \ge 1$, and so there exists $g\in A_j$ such that $v_i(g) \ge 1$.
Plugging such a good $g$ into \eqref{eq:EF1-integer}, we get $v_i(A_j) \ge 2$.
By \eqref{eq:EF1-integer} again, $v_i(A_j) \ge v_i(g) + 1$ for each $g\in A_j$.

Let $B = \{g \in A_j\colon v_i(g) > 0\}$; from the previous paragraph, we have $|B| \ge 2$.
Since $\mathcal{A}$ is an MHW allocation, moving any good $g$ from $A_j$ to $A_i$ cannot increase the harmonic welfare.
In particular, $v_j(g) > 0$ for all $g\in B$, which also means that $v_j(g) \le v_j(A_j) - 1$ for each $g\in B$.
We have
\begin{align*}
H_{v_i(A_i)} + H_{v_j(A_j)} \ge H_{v_i(A_i)+v_i(g)} + H_{v_j(A_j)-v_j(g)}
\end{align*}
for all $g\in B$.
Equivalently,
\begin{align*}
\frac{1}{v_j(A_j) - v_j(g) + 1} + \dots + \frac{1}{v_j(A_j)} 
\ge \frac{1}{v_i(A_i) + 1} + \dots + \frac{1}{v_i(A_i) + v_i(g)}.
\end{align*}
Combining this with \eqref{eq:EF1-integer} yields
\begin{align*}
\frac{1}{v_j(A_j) - v_j(g) + 1} + \dots + \frac{1}{v_j(A_j)} 
\ge \frac{1}{v_i(A_j) - v_i(g)} + \dots + \frac{1}{v_i(A_j) - 1}.
\end{align*}
Since $v_i(A_j) - v_i(g) \ge 1$ and $v_j(A_j) - v_j(g) + 1 \ge 2$ for every~$g\in B$, applying Lemma~\ref{lem:Riemann} to both sides, we get
\begin{align*}
\ln\left(\frac{v_j(A_j)}{v_j(A_j) - v_j(g)}\right) > \ln\left(\frac{v_i(A_j)}{v_i(A_j) - v_i(g)}\right)
\end{align*}
for all $g\in B$.
This implies that 
\begin{align*}
\frac{v_j(A_j)}{v_j(A_j) - v_j(g)} > \frac{v_i(A_j)}{v_i(A_j) - v_i(g)},
\end{align*}
which simplifies to $v_i(g)/v_j(g) < v_i(A_j)/v_j(A_j)$ for all $g\in B$.
As a result, we have
\begin{align*}
\frac{v_i(A_j)}{v_j(A_j)} 
= \frac{\sum_{g\in A_j}v_i(g)}{\sum_{g\in A_j}v_j(g)}
\le \frac{\sum_{g\in B}v_i(g)}{\sum_{g\in B}v_j(g)}
< \frac{v_i(A_j)}{v_j(A_j)};
\end{align*}
the first inequality holds because $\sum_{g\in A_j}v_i(g) = \sum_{g\in B}v_i(g)$ by definition of $B$ and $\sum_{g\in A_j}v_j(g) \ge \sum_{g\in B}v_j(g)$ due to the relation $B\subseteq A_j$.
This yields the desired contradiction.
\end{proof}

In spite of \Cref{thm:MHW-EF1}, a disadvantage of MHW compared to MNW is that MHW is not \emph{scale-invariant}---multiplying the value of a certain agent for every good by the same factor may change the MHW outcome.
Moreover, MHW is well-defined only when all utilities are integers.
Even though there is a natural extension of the harmonic numbers to the real domain given by $H_x = \int_0^1 \frac{1-t^x}{1-t} \diff t$ for $x\in\mathbb{R}$ \citep{Hintze19}, MHW defined via this extension is not guaranteed to satisfy EF1.

\begin{proposition}
\label{prop:MHW-extended}
There exists an instance with $n = 2$ agents with equal weights and additive valuations such that every MHW allocation, where MHW is defined based on the extended harmonic numbers $H_x = \int_0^1 \frac{1-t^x}{1-t} \diff t$, does not satisfy EF1.
\end{proposition}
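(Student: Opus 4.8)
The plan is to exhibit an explicit instance with two equal-weight agents and additive (but non-integer) valuations whose unique maximum harmonic welfare allocation, under the extended harmonic numbers $H_x = \int_0^1 \frac{1-t^x}{1-t}\diff t$, hands all goods to a single agent while the other agent positively values each good, so that no single-good removal can undo the envy. The construction is guided by the observation that the EF1 proof of \Cref{thm:MHW-EF1} invoked integrality exactly once, in the step $v_i(A_i) \le v_i(A_j) - v_i(g) - 1$; removing this integer ``buffer'' of $1$ is precisely what lets a maximizer hoard goods. Concretely, I would take $m = 2$ goods with $v_1(g_1) = \tfrac12$, $v_1(g_2) = 2$, $v_2(g_1) = \tfrac{1}{20}$, and $v_2(g_2) = 1$.

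First I would argue that every MHW allocation is complete: since $H_x$ is increasing and the valuations are monotone, assigning any unallocated good with positive value to an agent who values it strictly increases the harmonic welfare, and as both goods have positive value to agent~$1$, a maximizer leaves nothing unassigned. It then suffices to compare the four complete allocations, whose welfares are $H_{5/2}$ (both goods to agent~$1$), $H_{21/20}$ (both to agent~$2$), $H_{1/2}+H_1$, and $H_2+H_{1/20}$. The decisive (tightest) comparison is against the split $(\{g_1\},\{g_2\})$: using the recurrence $H_x = H_{x-1} + \tfrac1x$ (valid for the extended numbers since $\psi(x+1)-\psi(x)=\tfrac1x$), one obtains the exact identity $H_{5/2} - (H_{1/2}+H_1) = \tfrac23 + \tfrac25 - 1 = \tfrac{1}{15} > 0$. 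For the two remaining allocations I would bound the small transcendental terms by concavity: since $H_x$ is concave with $H_0 = 0$, it lies below its tangent at $0$, giving $H_x \le \tfrac{\pi^2}{6}\,x$ for $x\in(0,1]$; hence $H_{1/20}\le \tfrac{\pi^2}{120} < H_{5/2}-H_2$ and $H_{21/20} = H_{1/20} + \tfrac{20}{21} < H_{5/2}$, so both are strictly dominated (crude bounds such as $\ln 2 < 0.7$ and $\pi^2 < 9.87$ suffice to make these rigorous).

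Having shown that the allocation giving both goods to agent~$1$ is the unique MHW allocation, I would finish by checking EF1 for the pair from agent~$2$ to agent~$1$: here $v_2(\emptyset)=0$ while $v_2(\{g_2\}) = 1 > 0$ and $v_2(\{g_1\}) = \tfrac1{20} > 0$, so no removal of a single good from agent~$1$'s bundle leaves agent~$2$ envy-free, and EF1 fails. The main obstacle is that all welfare comparisons a priori involve the transcendental function $H_x$; I resolve this by choosing half-integer utilities for the dominant competitor so that the critical margin collapses to the exact rational value $\tfrac1{15}$ via the recurrence, while the concavity tangent bound disposes of the small terms with ample slack.
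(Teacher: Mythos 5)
Your proposal is correct and takes essentially the same approach as the paper: exhibit an explicit two-agent instance, enumerate the complete allocations, and show that the unique MHW allocation under the extended harmonic numbers hands the goods to one agent while the other agent values every good positively, so EF1 fails. The only differences are cosmetic: the paper uses a three-good instance ($v_1(g_1)=0$, $v_1(g_2)=v_1(g_3)=4$, $v_2(g_1)=1.9$, $v_2(g_2)=v_2(g_3)=2$) and verifies the welfare comparisons by asserted decimal estimates of values such as $H_8$ and $H_{1.9}$, whereas your two-good instance is engineered so that the tight comparison reduces to the exact rational margin $\tfrac{1}{15}$ via the recurrence $H_x = H_{x-1} + \tfrac{1}{x}$, with the remaining cases handled by the tangent bound $H_x \le \tfrac{\pi^2}{6}x$ and $\ln 2 < 0.7$ --- a slightly more self-contained verification of the same argument.
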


\begin{proof}
Let $m = 3$, and assume that the agents' valuations are given by $v_1(g_1) = 0$, $v_1(g_2) = v_1(g_3) = 4$, $v_2(g_1) = 1.9$, and $v_2(g_2) = v_2(g_3) = 2$.
Clearly, in any MHW allocation, $g_1$ must be allocated to agent~$2$.
We consider the three possibilities.
\begin{itemize}
\item If agent~$1$ receives both $g_2$ and $g_3$, the harmonic welfare is $H_8 + H_{1.9} > 2.717 + 1.459 = 4.176$.
\item If agent~$1$ receives one of $g_2$ and $g_3$, the harmonic welfare is $H_4 + H_{3.9} < 2.084 + 2.061 = 4.145$.
\item If agent~$1$ receives neither $g_2$ nor $g_3$, the harmonic welfare is $H_0 + H_{5.9} < 0 + 2.435 = 2.435$.
\end{itemize}
Hence, the only MHW allocation gives $g_1$ to agent~$2$ and both $g_2$ and $g_3$ to agent~$1$.
However, agent~$2$ is not EF1 toward agent~$1$ with respect to this allocation.
\end{proof}

\end{document}